\DeclareSymbolFont{extraup}{U}{zavm}{m}{n}
\DeclareMathSymbol{\vardiamond}{\mathalpha}{extraup}{87}
\newcommand{\ntr}{\mathfrak{N}}
\newcommand{\ltr}{\mathfrak{L}}
\newcommand{\seqcomp}{\otimes}
\newcommand{\ifandonlyif}{\textit{iff} }
\newcommand{\etc}{$\ldots$ }
\newcommand{\dfn}{Def.}
\newcommand{\fig}{Fig.}
\newcommand{\lem}{Lem.}
\newcommand{\thm}{Thm.}
\newcommand{\sect}{Sect.}
\newcommand{\bl}{[}
\newcommand{\br}{]}
\newcommand{\sbl}{\{}
\newcommand{\sbr}{\}}
\newcommand{\lika}{\mathbf{L}_{\Box\Diamond}(\axs)} 
\newcommand{\ikal}{\mathsf{IK}(\axs)\mathsf{L}}
\newcommand{\nika}{\mathsf{NIK}(\axs)}
\newcommand{\inp}{\bullet}
\newcommand{\outp}{\circ}
\newcommand{\ns}{\Sigma}
\newcommand{\empseq}{\emptyset}
\newcommand{\botin}{(\bot^{\inp})}
\newcommand{\conin}{(\land^{\inp})}
\newcommand{\conout}{(\land^{\outp})}
\newcommand{\disin}{(\lor^{\inp})}
\newcommand{\disout}{(\lor^{\outp})}
\newcommand{\iimpin}{(\iimp^{\inp})}
\newcommand{\iimpout}{(\iimp^{\outp})}
\newcommand{\D}{(d)}
\newcommand{\boxout}{(\Box^{\outp})}
\newcommand{\diain}{(\dia^{\inp})}
\newcommand{\sar}{\vdash}
\newcommand{\h}{\mathsf{H}}
\newcommand{\axd}{\text{D}}
\newcommand{\axhsl}{\text{HSL}}
\newcommand{\axt}{\text{T}}
\newcommand{\axfour}{\text{4}}
\newcommand{\axfive}{\text{5}}
\newcommand{\axb}{\text{B}}
\newcommand{\gmp}{\phi(n,k)}
\newcommand{\hsl}{\phi(n,k)}
\newcommand{\ik}{\mathsf{IK}}
\newcommand{\ika}{\mathsf{IK}(\mathcal{A})}
\newcommand{\cate}{}
\newcommand{\concat}{\cdot}
\newcommand{\thuesys}{S}
\newcommand{\g}[1]{g(#1)}
\newcommand{\glang}{L_{\g{\axs}}}
\newcommand{\pto}{\longrightarrow}
\newcommand{\der}{\pto^{*}_{\g{\axs}}}
\newcommand{\fd}{\dia}
\newcommand{\bd}{\blacklozenge}
\newcommand{\ques}{\langle ? \rangle}
\newcommand{\prgr}[1]{PG(#1)}     
\newcommand{\prgrdom}{V}
\newcommand{\prgredges}{E}    
\newcommand{\empstr}{\varepsilon} 
\newcommand{\emppath}{\lambda} 
\newcommand{\ppath}{\pi}
\newcommand{\stra}{s}
\newcommand{\strb}{t}
\newcommand{\strc}{r}
\newcommand{\conv}[1]{{#1}^{-1}}
\newcommand{\albetstr}{\albet^{\ast}}
\newcommand{\osdr}{\pto_{\g{\axs}}}
\newcommand{\dr}{\pto_{\g{\axs}}^{*}}
\newcommand{\iimp}{\supset}
\newcommand{\ieq}{\equiv}
\newcommand{\inot}{{\sim}}
\newcommand{\dia}{\Diamond}
\newcommand{\albet}{\Upsigma}
\newcommand{\lang}{\mathcal{L}}
\newcommand{\prop}{\Upphi}
\newcommand{\axs}{\mathcal{A}}
\newcommand{\lab}{Lab}
\newcommand{\id}{(id)}
\newcommand{\botl}{(\bot_{l})}
\newcommand{\disr}{(\lor_{r})}
\newcommand{\conr}{(\land_{r})}
\newcommand{\boxr}{(\Box_{r})}
\newcommand{\boxl}{(\Box_{l})}
\newcommand{\diar}{(\dia_{r})}
\newcommand{\dial}{(\dia_{l})}
\newcommand{\iimpl}{(\iimp_{l})}
\newcommand{\iimpr}{(\iimp_{r})}
\newcommand{\disl}{(\lor_{l})}
\newcommand{\conl}{(\land_{l})}
\newcommand{\wk}{(w)}
\newcommand{\nec}{(n)}
\newcommand{\med}{(m)}
\newcommand{\ctr}{(c)}
\newcommand{\sa}{(S_{n,k})}
\newcommand{\prbox}{(p_{\Box})}
\newcommand{\prdia}{(p_{\dia})}
\newcommand{\rel}{\mathcal{R}}
\begin{document}

\title{Nested Sequents for Intuitionistic Modal Logics via Structural Refinement\thanks{Work supported by the European Research Council (ERC) Consolidator Grant 771779 
 (DeciGUT).}}

\author{Tim S. Lyon\orcidID{0000-0003-3214-0828}}

\institute{Computational Logic Group, Institute of Artificial Intelligence, Technische Universit\"at Dresden, Germany  \\ \email{timothy\_stephen.lyon@tu-dresden.de}
}


\maketitle

\begin{abstract}
We employ a recently developed methodology---called \emph{structural refinement}---to extract nested sequent systems for a sizable class of intuitionistic modal logics from their respective labelled sequent systems. 
 This method can be seen as a means by which labelled sequent systems can be transformed into nested sequent systems through the introduction of propagation rules and the elimination of structural rules, followed by a notational translation. The nested systems we obtain incorporate propagation rules that are parameterized with formal grammars, and which encode certain frame conditions expressible as first-order Horn formulae that correspond to a subclass of the Scott-Lemmon axioms. We show that our nested systems are sound, cut-free complete, and admit hp-admissibility of typical structural rules.
\keywords{Bi-relational model · Intuitionistic modal logic · Labelled sequent · Nested sequent · Proof theory · Propagation rule · Refinement}
\end{abstract}

\section{Introduction}\label{sec:introduction}


Intuitionistic modal logics enable intuitionistic reasoning with the intensional operators $\dia$ and $\Box$. While a variety of different intuitionistic modal logics have been proposed~\cite{BiePai00,Fit48,PloSti86,Ser84,Sim94}, we focus on those defined in~\cite{PloSti86}, which extend the intuitionistic modal logic $\ik$ with Scott-Lemmon axioms~\cite{Lem77}. These logics were placed on a firm philosophical footing in~\cite{Sim94} due to their satisfaction of certain requirements that one might reasonably impose upon an intuitionistic version of modal logic. Although such logics are interesting in their own right, intuitionistic modal logics have proven useful in practical applications: having been applied in the verification of computer hardware~\cite{FaiMen95}, to facilitate reasoning about functional programs~\cite{Pit91}, and in defining programming languages~\cite{DavPfe01}.



The development of intuitionistic modal logics naturally gave rise to an accompanying proof theory. Labelled natural deduction and sequent systems were provided for $\ik$ extended with geometric axioms in~\cite{Sim94}. In~\cite{GalSal10} and~\cite{GalSal15}, label-free natural deduction systems and tree-sequent calculi were respectively provided for extensions of $\ik$ with combinations of the reflexivity axiom ($\axt$), symmetry axiom ($\axb$), transitivity axiom ($\axfour$), and Euclidean axiom ($\axfive$). In~\cite{Str13}, nested sequent systems were proposed for all logics within the intuitionistic modal cube (i.e. logics axiomatized by extending $\ik$ with a subset of the axioms $\axt$, $\axb$, $\axfour$, $\axfive$, and the seriality axiom $\axd$). Such systems provide a suitable basis for developing automated reasoning and proof-search methods, having been used---in particular---to establish the decidability of logics within the intuitionistic modal cube~\cite{GalSal15,Sim94}.

With the exception of the systems introduced in~\cite{Sim94}, the drawback of the aforementioned proof systems is that they are rather limited, only being defined for a handful of logics. Indeed, in a recent paper on nested systems for intuitionistic modal logics~\cite{MarStr14}, the authors leave open the problem of defining rules within the nested sequent formalism that allow for the capture of logics \emph{outside} the intuitionistic modal cube. Accomplishing such a task would prove beneficial, since systems built within the nested formalism tend to be more economical (viz. they utilize simpler data structures) than those built within the labelled formalism, and have proven well-suited for the construction of analytic calculi~\cite{Bru09,Bul92,Kas94}, for writing decision algorithms~\cite{GalSal15,TiuIanGor12}, and for verifying interpolation~\cite{FitKuz15,LyoTiuGorClo20}.




In this paper, we answer the open problem of~\cite{MarStr14} to a large extent, and provide cut-free nested sequent systems for extensions of $\ik$ with what we call \emph{Horn-Scott-Lemmon axioms (HSLs)}, namely, axioms of the form $(\dia^{n} \Box A \iimp \Box^{k} A) \land (\dia^{k} A \iimp \Box^{n} \dia A)$. We obtain such systems through the recently developed \emph{structural refinement} methodology~\cite{Lyo21}, which consists of transforming a labelled sequent system into a nested system through the introduction of \emph{propagation rules} (cf.~\cite{CasCerGasHer97,Fit72}) and the elimination of structural rules, followed by a notational translation. The propagation rules operate by viewing labelled sequents (which encode binary labelled graphs) as automata, allowing for formulae to be propagated along a path in the underlying graph of a labelled sequent, so long as the path is encoded by a string derivable in a certain formal grammar. The refinement methodology grew out of works relating labelled systems to `more refined' or nested systems~\cite{CiaLyoRam18,GorRam12,LyoBer19,Pim18}. Also, the propagation rules we use are largely based upon the work of~\cite{GorPosTiu11,TiuIanGor12}, where such rules were used in the setting of display and nested calculi. These rules were then transported to the labelled setting to prove the decidability of agency logics~\cite{LyoBer19}, to establish translations between calculi within various proof-theoretic formalisms~\cite{CiaLyoRamTiu21}, and to provide a basis for the structural refinement methodology~\cite{Lyo21}.

This paper accomplishes the following: First, we show that structural refinement can be used to extract nested sequent systems from Simpson's labelled sequent systems~\cite{Sim94} with proofs in the latter formalism algorithmically translatable into proofs of the nested formalism.  
 Second, we provide sound and cut-free complete nested sequent systems for a considerable class of intuitionistic modal logics, and show that such systems admit the height-preserving admissibility (which we refer to as \emph{hp-admissibility}) of certain structural rules (e.g. forms of weakening and contraction). 
 Third, we provide an answer to the open problem of~\cite{MarStr14} to a large degree, giving a straightforward procedure for transforming axioms (viz. HSLs) into propagation/logical rules.

We have organized this paper accordingly: In \sect~\ref{sec:log-prelims}, we define the intuitionistic modal logics considered, along with their axiomatizations and semantics. In \sect~\ref{sec:grammar-theory}, we introduce fundamental concepts in grammar theory that are needed for the definition of our propagation rules. We then introduce Simpson's labelled sequent calculi for intuitionistic modal logics in \sect~\ref{sec:labelled-systems}, and show how to structurally refine them in \sect~\ref{sec:refinement}. Last, in \sect~\ref{sec:nested-calculi}, we translate the refined labelled systems of the previous section into sound and cut-free complete nested sequent systems admitting the hp-admissibility of certain structural rules.


\section{Logical Preliminaries}\label{sec:log-prelims}


In this section, we introduce the language, semantics, and axiomatization for the intuitionistic modal logic $\ik$~\cite{PloSti86}.\footnote{See Simpson's 1994 PhD Thesis~\cite{Sim94} for a detailed introduction and discussion of $\ik$.} Moreover, we also introduce extensions of $\ik$ (referred to as \emph{intuitionistic modal logics} more generally) with the seriality axiom $\axd$ and axioms that we refer to as \emph{Horn-Scott-Lemmon Axioms ($\axhsl$s)}.

We define our intuitionistic modal language $\lang$ to be the set of formulae generated via the following BNF grammar:
$$
A ::= p \ | \ \bot \ | \ A \lor A \ | \ A \land A \ | \ A \iimp A \ | \ \dia A \ | \ \Box A
$$
where $p$ ranges over the set of propositional atoms $\prop := \{p, q, r, \ldots\}$. We use $A$, $B$, $C$, \etc (occasionally annotated) to range over formulae in $\lang$, and define $\inot A := A \iimp \bot$ and $A \ieq B := (A \iimp B) \land (B \iimp A)$. For $n \in \mathbb{N}$, we use $\dia^{n} A$ and $\Box^{n} A$ to represent the formula $A$ prefixed with a sequence of $n$ diamonds or boxes, respectively. We interpret such formulae on \emph{bi-relational models}~\cite{PloSti86,Sim94}: 

\begin{definition}[Bi-relational Model~\cite{PloSti86}]\label{def:bi-relational-model} We define a \emph{bi-relational model} to be a tuple $M := (W, \leq, R, V)$ such that:
\begin{itemize}

\item $W$ is a non-empty set of \emph{worlds} $w, u, v, \ldots$ (potentially annotated);

\item The \emph{intuitionistic relation} $\leq \ \subseteq W \times W$ is 
 reflexive and transitive;

\item The \emph{accessibility relation} $R \subseteq W \times W$ satisfies:

\begin{itemize}

\item[(F1)] For all $w, v, v' \in W$, if $w R v$ and $v \leq v'$, then there exists a $w' \in W$ such that $w \leq w'$ and $w' R v'$;

\item[(F2)] For all $w, w', v \in W$, if $w \leq w'$ and $w R v$, then there exists a $v' \in W$ such that $w' R v'$ and $v \leq v'$;

\end{itemize}


\item $V : W \to 2^{\prop}$ is a \emph{valuation function} satisfying the \emph{monotonicity condition}:  For each $w, u \in W$, if $w \leq u$, then $V(w) \subseteq V(u)$.

\end{itemize}

\end{definition}






Formulae from $\lang$ may then be interpreted over bi-relational models as specified by the semantic clauses below. 

\begin{definition}[Semantic Clauses~\cite{PloSti86}]
\label{def:semantic-clauses} Let $M$ be a bi-relational model with $w \in W$ of $M$. The \emph{satisfaction relation} $M,w \Vdash A$ 
 is defined recursively:

\begin{itemize}

\item $M,w \Vdash p$ \ifandonlyif $p \in V(w)$, for $p \in \prop$;

\item $M, w \not\Vdash \bot$;

\item $M,w \Vdash A \lor B$ \ifandonlyif $M,w \Vdash A$ or $M,w \Vdash B$;

\item $M,w \Vdash A \land B$ \ifandonlyif $M,w \Vdash A$ and $M,w \Vdash B$;

\item $M,w \Vdash A \iimp B$ \ifandonlyif for all $w' \in W$, if $w \leq w'$ and $M,w' \Vdash A$, then $M,w' \Vdash B$;

\item $M,w \Vdash \dia A$ \ifandonlyif there exists a $v \in W$ such that $w R v$ and $M,v \Vdash A$;

\item $M,w \Vdash \Box A$ \ifandonlyif for all $w', v' \in W$, if $w \leq w'$ 
 and $w' R v'$, then $M,v' \Vdash A$.

\end{itemize}
We say that a formula $A$ is \emph{globally true on $M$}, written $M \Vdash A$, \ifandonlyif $M,u \Vdash A$ for all worlds $u \in W$ of $M$, and we say that a formula $A$ is \emph{valid}, written $\Vdash A$, \ifandonlyif $A$ is globally true on all bi-relational models.
\end{definition}

As shown by Plotkin and Stirling in~\cite{PloSti86}, the validities of $\ik$ are axiomatizable:

\begin{definition}[Axiomatization~\cite{PloSti86}] We define the axiomatization $\h\ik$ as:

\begin{multicols}{2}
\begin{itemize}

\item[A0] All theorems of propositional intuitionistic logic

\item[A1] $\Box (A \iimp B) \iimp (\Box A \iimp \Box B)$

\item[A2] $\Box (A \iimp B) \iimp (\dia A \iimp \dia B)$

\item[A3] $\inot \dia \bot$

\item[A4] $\dia (A \lor B) \iimp (\dia A \lor \dia B)$

\item[A5] $(\dia A \iimp \Box B) \iimp \Box (A \iimp B)$

\item[R0] \AxiomC{$A$}\AxiomC{$A \iimp B$}\RightLabel{(mp)}\BinaryInfC{$B$}\DisplayProof

\item[R1] \AxiomC{$A$}\RightLabel{(nec)}\UnaryInfC{$\Box A$}\DisplayProof

\end{itemize}
\end{multicols}
We define $\ik$ to be the smallest set of formulae closed under substitutions of the above axioms and applications of the inference rules, and define $A$ to be a \emph{theorem} of $\ik$ \ifandonlyif $A  \in \ik$.
\end{definition}

We also consider 
 extensions of $\h\ik$ with sets $\axs$ of the following axioms:
$$
\axd: \Box A \iimp \dia A \quad \axhsl: (\dia^{n} \Box A \iimp \Box^{k} A) \land (\dia^{k} A \iimp \Box^{n} \dia A)
$$
The above left axiom is referred to as the \emph{seriality axiom $\axd$} and axioms of the form above right are referred to as \emph{Horn-Scott-Lemmon axioms ($\axhsl$s)}, which we use $\gmp$ to denote.\footnote{We note that the term \emph{Horn-Scott-Lemmon axiom} arises from the fact that such axioms form a proper subclass of the well-known \emph{Scott-Lemmon Axioms}~\cite{Lem77} and are associated with frame conditions that are expressible as Horn formulae~\cite[\sect~7.2]{Sim94}.} For the remainder of the paper, we use $\axs$ to denote an arbitrary set of the above axioms, that is:
$$
\axs \subseteq \{\axd\} \cup \{(\dia^{n} \Box A \iimp \Box^{k} A) \land (\dia^{k} A \iimp \Box^{n} \dia A) \ | \ n,k \in \mathbb{N} \}
$$
The set of $\axhsl$s includes well-known axioms such as:
$$
\axt: (A \iimp \dia A) \land (\Box A \iimp A) \quad
\axfour: (\dia \dia A \iimp \dia A) \land (\Box A \iimp \Box \Box A)
$$
$$
\axb: (\dia \Box A \iimp A) \land (A \iimp \Box \dia A) \quad
\axfive: (\dia \Box A \iimp \Box A) \land (\dia A \iimp \Box \dia A)
$$
The work of Simpson~\cite{Sim94} establishes that any extension of $\h\ik$ with a set $\axs$ of axioms is sound and complete relative to a subclass of the bi-relational models. In particular, the extension of $\h\ik$ with a set $\axs$ of axioms is sound and complete relative to the set of bi-relational models satisfying the frame conditions related to the axioms of $\axs$, as specified in \fig~\ref{fig:axioms-related-conditions}.\footnote{We note that the axioms we consider do not \emph{characterize} the set of frames satisfying the frame properties related to the axioms as they do in the classical setting. For more details concerning this point, see~\cite[p.~56]{Sim94}, and for details concerning the proper characterization results of the above axioms, see~\cite{PloSti86}.} We define axiomatic extensions of $\h\ik$ along with their corresponding models below:

\begin{figure}[t]
\begin{center}
\bgroup
\def\arraystretch{1.1}
\begin{tabular}{| l | l |}
\hline
Axiom & Frame Condition\\
\hline
$\Box A \iimp \dia A$ & $\forall w \exists u (w R u)$\\
$(\dia^{n} \Box A \iimp \Box^{k} A) \land (\dia^{k} A \iimp \Box^{n} \dia A)$ & $\forall w, u, v (w R^{n} u \land w R^{k} v \iimp u R v)$\\
\hline
\end{tabular}
\egroup
\end{center}

\caption{Axioms and their related frame conditions. We note that when $n=0$, the related frame condition is $\forall w, v (w R^{k} v \iimp w R v)$, when $k = 0$, the related frame condition is $\forall w, u (w R^{n} u \iimp u R w)$, and when $n = k = 0$, the related frame condition is $\forall w (w R w)$.}
\label{fig:axioms-related-conditions}
\end{figure}

\begin{definition}[Extensions, Bi-relational $\axs$-model, $\axs$-valid] The axiomatization $\h\ik(\axs)$ is defined to be $\h\ik$ extended with the axioms from $\axs$, and we define the \emph{logic} $\ika$ to be the smallest set of formulae closed under substitutions of the axioms of $\h\ika$ and applications of the inference rules. Also, a \emph{theorem} of $\ika$ is a formula $A$ such that $A \in \ika$. Moreover, we define a \emph{bi-relational $\axs$-model} to be a bi-relational model satisfying each frame condition related to an axiom $A \in \axs$ (as specified in \fig~\ref{fig:axioms-related-conditions}). Last, a formula $A$ is \emph{$\axs$-valid} \ifandonlyif it is globally true on all $\axs$-models.
\end{definition}

\begin{remark} We note that $\h\ik = \h\ik(\emptyset)$ and that a bi-relational $\emptyset$-model is a bi-relational model.
\end{remark}

\begin{theorem}[Soundness and Completeness~\cite{Sim94}]\label{thm:sound-complete-ikma}
A formula is a theorem of $\h\ika$ \ifandonlyif it is valid in all $\axs$-frames.
\end{theorem}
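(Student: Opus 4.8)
The plan is to prove the two implications separately, following the Henkin-style argument of Simpson~\cite{Sim94}. For \emph{soundness} (every theorem of $\h\ika$ is $\axs$-valid), I would first establish the \emph{persistence lemma}: for every $A \in \lang$, every bi-relational model $M$, and all $w \leq w'$ of $M$, if $M, w \Vdash A$ then $M, w' \Vdash A$. This is proved by induction on $A$: the atomic case is the monotonicity condition on $V$, the cases for $\land$, $\lor$, and $\iimp$ use only reflexivity and transitivity of $\leq$, and the cases for $\Box$ and $\dia$ invoke the frame conditions (F1) and (F2), respectively. With persistence in hand, soundness follows by induction on the length of a $\h\ika$-derivation: one checks that the axioms A0--A5 are valid on every bi-relational model (A5 again using (F1) and (F2)), that each axiom of $\axs$ is valid on every $\axs$-model --- $\axd$ (when present) from seriality, and an $\axhsl$ $\gmp$ from the frame condition $\forall w,u,v\,(w R^{n} u \land w R^{k} v \iimp u R v)$, with the degenerate cases $n=0$, $k=0$, and $n=k=0$ handled via the specialised frame conditions recorded in \fig~\ref{fig:axioms-related-conditions} --- and that (mp) and (nec) preserve global truth on $\axs$-models.

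For \emph{completeness} I would argue the contrapositive: if $A \notin \ika$, then $A$ is refuted in some $\axs$-model. I would build the \emph{canonical model} $\canmod = (\candom, \canirel, R^{C}, \canv)$, taking $\candom$ to be the set of prime $\h\ika$-theories (consistent, deductively closed, and satisfying the disjunction property), $\canirel$ to be set inclusion, $\canv(\Gamma) := \Gamma \cap \prop$, and $\Gamma\, R^{C}\, \Delta$ to hold iff $\{B : \Box B \in \Gamma\} \subseteq \Delta$ and $\{\dia B : B \in \Delta\} \subseteq \Gamma$. The supporting machinery consists of a \emph{Lindenbaum extension lemma} --- any pair $(\Gamma, A)$ with $A$ not $\h\ika$-derivable from $\Gamma$ extends to a prime theory containing $\Gamma$ but omitting $A$ --- together with the attendant \emph{existence lemmas} for $\Box$ and $\dia$, which supply the $R^{C}$-successors and $\canirel$-extensions witnessing the failure of $\Box$-formulae and the truth of $\dia$-formulae. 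Using these I would verify that $\canmod$ is a bi-relational model (reflexivity and transitivity of inclusion, and monotonicity of $\canv$, are immediate; (F1) and (F2) follow from the two clauses defining $R^{C}$ together with the extension and existence lemmas) and, crucially, that it is an $\axs$-model (seriality from $\axd$, and each $\axhsl$ frame condition distilled from the presence of the corresponding $\gmp$ in $\h\ika$). The \emph{Truth Lemma}, $\canmod, \Gamma \Vdash B$ iff $B \in \Gamma$, then follows by induction on $B$; since $A \notin \ika$, the extension lemma supplies a prime theory $\Gamma_{0}$ with $A \notin \Gamma_{0}$, so $\canmod, \Gamma_{0} \not\Vdash A$ and $A$ fails to be $\axs$-valid, as required.

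I expect the principal obstacle to be the verification that $\canmod$ satisfies the $\axhsl$ frame conditions: given an $R^{C}$-chain of length $n$ from $\Gamma$ to $\Delta$ and one of length $k$ from $\Gamma$ to $\Theta$, one must show $\Delta\, R^{C}\, \Theta$, which amounts to propagating the modal content of the two conjuncts of $\gmp$, namely $\dia^{n}\Box A \iimp \Box^{k}A$ and $\dia^{k}A \iimp \Box^{n}\dia A$, forward along these chains --- and making this cohere with the entanglement of $\canirel$ and $R^{C}$ imposed by (F1), (F2), and persistence, which is precisely where the intuitionistic setting departs from the classical one. As this is Simpson's result, for the full uniform treatment --- including the boundary cases in which $n$ or $k$ vanishes --- I would defer to~\cite[\sect~7]{Sim94}.
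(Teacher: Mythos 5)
Your soundness half is essentially fine (modulo a small slip: with the semantic clause for $\Box$ used here, persistence for $\Box$-formulae needs only transitivity of $\leq$, not (F1); (F2) is what the $\dia$ case uses), but the completeness half has a genuine gap at its crux. The entire argument rests on verifying that the canonical model of prime $\h\ika$-theories satisfies each HSL frame condition $\forall w,u,v\,(w R^{n} u \land w R^{k} v \iimp u R v)$, i.e.\ on the \emph{canonicity} of the axioms $\gmp$ over bi-relational frames. You flag this as the principal obstacle and defer it to Simpson, but Simpson does not prove it, and it is not a routine propagation argument: in the intuitionistic setting the HSL axioms do not even correspond to these frame conditions (the paper notes this explicitly in the footnote accompanying \fig~\ref{fig:axioms-related-conditions}), so the classical Lemmon--Scott canonicity argument you are implicitly importing has no direct analogue, and the interaction of $\canirel$, $R^{C}$, (F1) and (F2) is exactly where it breaks down. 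As it stands, the step ``$\canmod$ is an $\axs$-model'' is asserted, not proved, and it is the only step that distinguishes this theorem from plain completeness of $\ik$.

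This also means your route diverges from the one the paper actually takes, which is simply to cite Simpson (\thm~6.2.1 and \thm~8.1.4 of the thesis). Simpson's completeness proof is not a canonical-model argument for the Hilbert system at all: he proves completeness of the \emph{labelled} natural deduction systems by a Henkin-style construction over labelled theories in which the relational component is closed under the geometric/Horn rules, so the frame conditions hold \emph{by construction} rather than by a canonicity argument; Hilbert completeness then follows from the equivalence of $\h\ika$ with the labelled system. If you want a self-contained proof along your lines, you must either supply a genuine canonicity proof for the $\axhsl$s over bi-relational frames (nontrivial, and not available in the cited source) or restructure the completeness direction through a labelled or otherwise relationally saturated construction as Simpson does.
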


\begin{proof}
Follows from \thm~6.2.1 and \thm~8.1.4 of~\cite{Sim94}.
\qed
\end{proof}

\section{Grammar Theoretic Preliminaries}\label{sec:grammar-theory}


As will be seen later on (viz. in \sect~\ref{sec:refinement} and~\ref{sec:nested-calculi}), a central component to our refinement methodology---i.e. the extraction of nested calculi from labelled---is the use of inference rules whose applicability is determined on the basis of strings generated by a formal grammar. We therefore introduce grammar-theoretic notions that are essential to the functionality of such rules.

We let $\albet$ be our \emph{alphabet} consisting of the \emph{characters} $\fd$ and $\bd$, that is, $\albet := \{\fd,\bd\}$. The symbols $\fd$ and $\bd$ will be used to encode information about the accessibility relation $R$ of a bi-relational model in certain inference rules of our calculi. In particular, $\fd$ will be used to encode information about what is happening in the \emph{future} of the accessibility relation, and $\bd$ will be used to encode information about what is happening in the \emph{past} of the accessibility relation. We note that such symbols have been chosen due to their analogous meaning in the context of tense logics~\cite{GorPosTiu11,Kas94}. Also, following~\cite{GorPosTiu11}, we let $\ques \in \albet$ and $\conv{\ques} \in \albet \setminus \{\ques\}$, i.e. $\conv{\fd} := \bd$ and $\conv{\bd} := \fd$; we refer to $\fd$ and $\bd$ as \emph{converses} of one another. We may define \emph{strings} over our alphabet $\albet$ accordingly:

\begin{definition}[$\albetstr$] We let $\concat$ be the \emph{concatenation operation} with $\varepsilon$ the \emph{empty string}. We define the set $\albet^{*}$ of \emph{strings over $\albet$} to be the smallest set such that:
\begin{itemize}

\item $\albet \cup \{\varepsilon\} \subseteq \albet^{*}$

\item $\text{If } \stra \in \albet^{*} \text{ and } \ques \in \albet \text{, then } \stra \concat \ques \in \albet^{*}$

\end{itemize}
\end{definition}

For a set $\albetstr$ of strings, we use $\stra$, $\strb$, $\strc$, \etc (potentially annotated) to represent strings in $\albetstr$. Also, the empty string $\empstr$ is taken to be the identity element for the concatenation operation, i.e. $\stra \concat \empstr = \empstr \concat \stra = \stra$ for $\stra \in \albet^{*}$. Furthermore, we will not explicitly mention the concatenation operation in practice and let $\stra \cate \strb := \stra \concat \strb$, that is, we denote concatenation by simply gluing two strings together. Beyond concatenation, another useful operation to define on strings is the \emph{converse operation}, adapted from~\cite{TiuIanGor12}.

\begin{definition}[String Converse] We extend the converse operation to strings as follows:
\begin{itemize}

\item $\conv{\varepsilon} := \varepsilon$;

\item $\text{If } \stra = \ques_{1} \cdots \ques_{n} \text{, then } \conv{\stra} := \conv{\ques}_{n} \cdots \conv{\ques}_{1}$.

\end{itemize}
\end{definition}





We let $\ques^{n}$ denote a string consisting of $n$ copies of $\ques$, which is $\empstr$ when $n = 0$. Making use of such notation, we can compactly define the notion of an \emph{$\axs$-grammar}, which encodes information contained in a set $\axs$ of axioms, and which will be employed in the definition of certain inference rules (see \sect~\ref{sec:refinement}).

\begin{definition}[$\axs$-grammar]\label{def:grammar} We define an \emph{$\axs$-grammar} to be a set $\g{\axs}$ such that:
\begin{center}
$(\dia \pto \bd^{n} \cate \dia^{k}), (\bd \pto \bd^{k} \cate \fd^{n}) \in \g{\axs}$ \ifandonlyif $(\dia^{n} \Box A \iimp \Box^{k} A) \land (\dia^{k} A \iimp \Box^{n} \dia A) \in \axs$.
\end{center}
We call rules of the form $\ques \pto \stra$ \emph{production rules}, where $\ques \in \albet$ and $\stra \in \albetstr$.
\end{definition}


An $\axs$-grammar $\g{\axs}$ is a type of \emph{Semi-Thue system} (cf.~\cite{Pos47}), i.e. it is a string re-writing system. For example, assuming that $\ques \pto \stra \in \g{\axs}$, we may derive the string $\strb \cate \stra \cate \strc$ from $\strb \cate \ques \cate \strc$ in one-step by applying the mentioned production rule. As usual, through successive applications of production rules to a string $\stra \in \albetstr$, one obtains derivations of new strings, the collection of which, determines a language. We make such notions precise by means of the following definition:

\begin{definition}[Derivation, Language]\label{def:derivation-language} Let $\g{\axs}$ be an $\axs$-grammar. The \emph{one-step derivation relation} $\osdr$ holds between two strings $\stra$ and $\strb$ in $\albetstr$, written $\stra \osdr \strb$, \ifandonlyif there exist $\stra', \strb' \in \albetstr$ and $\ques \pto \strc \in \thuesys$ such that $\stra = \stra' \cate \ques \cate \strb'$ and $\strb = \stra' \cate \strc \cate \strb'$.  The \emph{derivation relation} $\dr$ is defined to be the reflexive and transitive closure of $\osdr$. For two strings $\stra, \strb \in \albetstr$, we refer to $\stra \dr \strb$ as a \emph{derivation of $\strb$ from $\stra$}, and define its \emph{length} to be equal to the minimal number of one-step derivations needed to derive $\strb$ from $\stra$ in $\g{\axs}$. Last, for a string $\stra \in \albet^{*}$, the \emph{language of $\stra$ relative to $\g{\axs}$} is defined to be the set $\glang(\stra) := \{\strb \ | \ \stra \dr \strb \}$.
\end{definition}



\section{Labelled Sequent Systems}\label{sec:labelled-systems}


We introduce equivalent variants of Simpson's labelled sequent systems for intuitionistic modal logics~\cite{Sim94}, which are uniformly presented in \fig~\ref{fig:labelled-calculi}. We use the name $\lika$ to denote a labelled system as opposed to Simpson's name $\mathbf{L}_{\Box\Diamond}(\mathcal{T})$ since we define each system relative to a set $\axs$ of axioms (cf.~\cite{Sim94}). The sole difference between Simpson's original systems and the systems presented here is that we copy principal formulae into the premises of some rules. This minor change will facilitate our work in the subsequent section.

Simpson's systems make use of a denumerable set $\lab := \{w,u,v,\ldots\}$ of \emph{labels} (which we sometimes annotate), as well as two distinct types of formulae: \emph{labelled formulae}, which are of the form $w : A$ with $w \in \lab$ and $A \in \lang$, and \emph{relational atoms}, which are of the form $wRu$ for $w,u \in \lab$. We define a \emph{labelled sequent} to be a formula of the form $\rel, \Gamma \sar w : A$, where $\rel$ is a (potentially empty) multiset of relational atoms, and $\Gamma$ is a (potentially empty) multiset of labelled formulae. Also, we define a sequence of relational atoms $wR^{n}u := wRw_{1}, w_{1}Rw_{2}, \ldots, w_{n-1}Ru$, for $n \in \mathbb{N}$, and note that $wR^{0}u := (w = u)$.

\begin{figure}[t]
\noindent\hrule

\begin{center}
\begin{tabular}{c c} 

\AxiomC{}
\RightLabel{$\id$}
\UnaryInfC{$\rel, w : p,\Gamma \sar w : p$}
\DisplayProof

&

\AxiomC{}
\RightLabel{$\botl$}
\UnaryInfC{$\rel, w : \bot,\Gamma \sar u : A$}
\DisplayProof
\end{tabular}
\end{center}

\begin{center}
\begin{tabular}{c c}

\AxiomC{$\rel, \Gamma, w : A \sar u : C$}
\AxiomC{$\rel, \Gamma, w : B \sar u : C$}
\RightLabel{$\disl$}
\BinaryInfC{$\rel, \Gamma, w : A \vee B \sar u : C$}
\DisplayProof

&

\AxiomC{$\rel, \Gamma \sar w : A_{i}$}
\RightLabel{$\disr~i \in \{1,2\}$}
\UnaryInfC{$\rel, \Gamma \sar w : A_{1} \vee A_{2}$}
\DisplayProof
\end{tabular}
\end{center}

\begin{center}
\begin{tabular}{c @{\hskip 1em} c @{\hskip 1em} c}

\AxiomC{$\rel, \Gamma, w :A, w :B \sar u : C$}
\RightLabel{$\conl$}
\UnaryInfC{$\rel \Gamma, w :A \wedge B \sar u : C$}
\DisplayProof

&

\AxiomC{$\rel, \Gamma \sar w : A$}
\AxiomC{$\rel, \Gamma \sar w : B$}
\RightLabel{$\conr$}
\BinaryInfC{$\rel, \Gamma \sar w : A \wedge B$}
\DisplayProof

\end{tabular}
\end{center}

\begin{center}
\begin{tabular}{c c}
\AxiomC{$\rel, \Gamma, w :A \iimp B \sar w :A$}
\AxiomC{$\rel, \Gamma, w : B \sar u : C$}
\RightLabel{$\iimpl$}
\BinaryInfC{$\rel, \Gamma, w :A \iimp B \sar u : C$}
\DisplayProof 

&

\AxiomC{$\rel, \Gamma, w : A \sar w : B$}
\RightLabel{$\iimpr$}
\UnaryInfC{$\rel, \Gamma \sar w : A \iimp B$}
\DisplayProof
\end{tabular}
\end{center}

\begin{center}
\begin{tabular}{c c}
\AxiomC{$\rel, w R u, \Gamma, u : A \sar v : B$}
\RightLabel{$\dial^{\dag}$}
\UnaryInfC{$\rel, \Gamma, w : \dia A \sar v : B$}
\DisplayProof

&

\AxiomC{$\rel, w R u, \Gamma \sar u : A$}
\RightLabel{$\diar$}
\UnaryInfC{$\rel, w R u, \Gamma \sar w : \dia A$}
\DisplayProof
\end{tabular}
\end{center}

\begin{center}
\begin{tabular}{c c}
\AxiomC{$\rel, w R u, \Gamma \sar u : A$}
\RightLabel{$\boxr^{\dag}$}
\UnaryInfC{$\rel, \Gamma \sar w : \Box A$}
\DisplayProof

&

\AxiomC{$\rel, w R u, \Gamma, w : \Box A, u : A \sar v : C$}
\RightLabel{$\boxl$}
\UnaryInfC{$\rel, w R u, \Gamma, w : \Box A \sar v : C$}
\DisplayProof
\end{tabular}
\end{center}

\begin{center}
\begin{tabular}{c c}
\AxiomC{$\rel, wRu, \Gamma \sar v : A$}
\RightLabel{$\D^{\dag}$}
\UnaryInfC{$\rel, \Gamma \sar v : A$}
\DisplayProof

&

\AxiomC{$\rel, w R^{n} u, w R^{k} v, u R v,  \Gamma \sar z : A$}
\RightLabel{$\sa$}
\UnaryInfC{$\rel, w R^{n} u, w R^{k} v, \Gamma \sar z : A$}
\DisplayProof
\end{tabular}
\end{center}

\hrule
\caption{The labelled calculi $\lika$. We have $\D$ as a rule in the calculus, if $\axd \in \axs$, and $\sa$ as a rule in the calculus, for each $\gmp \in \axs$. The side condition $\dag$ states that 
 $u$ must be an \emph{eigenvariable}, i.e. $u$ may not occur in the conclusion.} 
\label{fig:labelled-calculi}
\end{figure}

We refer to the $\id$ and $\botl$ rules as \emph{initial rules}, to the $\D$ and $\sa$ rules as \emph{structural rules}, and to the remaining rules in \fig~\ref{fig:labelled-calculi} as \emph{logical rules}. Our use of the term \emph{structural rules} in reference to $\D$ and $\sa$ is consistent with the use of the term in the literature on proof systems for modal and related logics~\cite{Bru09,CiaLyoRamTiu21,GorPosTiu08,GorPosTiu11} and is based on the fact that such rules manipulate the underlying data structure of sequents as opposed to introducing more complex logical formulae. Also, we point out that the $\sa$ rules form a proper subclass of Simpson's $(S_{\chi})$ \emph{geometric structural rules}~(see~\cite[p.~126]{Sim94}) used to generate labelled sequent systems for $\ik$ extended with any number of \emph{geometric axioms}. When $n=0$ or $k=0$ in an $\axhsl$, i.e. when $\phi(0,k) \in \axs$, $\phi(n,0) \in \axs$, or $\phi(0,0) \in \axs$, the structural rules $(S_{0,k})$, $(S_{n,0})$, and $(S_{0,0})$ are defined accordingly:


\begin{center}
\begin{tabular}{c c}
\AxiomC{$\rel, w R^{k} v, w R v,  \Gamma \sar z : A$}
\RightLabel{$(S_{0,k})$}
\UnaryInfC{$\rel, w R^{k} v, \Gamma \sar z : A$}
\DisplayProof

&

\AxiomC{$\rel, w R^{n} u, u R w,  \Gamma \sar z : A$}
\RightLabel{$(S_{n,0})$}
\UnaryInfC{$\rel, w R^{n} u, \Gamma \sar z : A$}
\DisplayProof
\end{tabular}
\end{center}
\begin{center}
\AxiomC{$\rel, w R w,  \Gamma \sar z : A$}
\RightLabel{$(S_{0,0})$}
\UnaryInfC{$\rel, \Gamma \sar z : A$}
\DisplayProof
\end{center}

Let us now define the semantics for our labelled sequents, and then we state the soundness and completeness theorem for $\lika$.

\begin{definition}[Labelled Sequent Semantics]\label{def:sequent-semantics-kms} Let $M := (W, \leq, R, V)$ be a bi-relational $\axs$-model with $I : \ \lab \mapsto W$ an \emph{interpretation function} mapping labels to worlds.  We define the \emph{satisfaction} of relational atoms and labelled formulae:
\begin{itemize}

\item $M, I \models wRu$ \ifandonlyif $I(w) R I(u)$;

\item $M, I \models w : A$ \ifandonlyif $M, I(w) \Vdash A$.

\end{itemize}

A labelled sequent $\Lambda := \rel, \Gamma \sar v : B$ is \emph{satisfied} in $M$ with $I$, written $M,I \models \Lambda$, \ifandonlyif if $M, I \models wRu$ for all $wRu \in \rel$ and $M, I \models w : A$ for all $w : A \in \Gamma$, then $M, I \models v : B$. A labelled sequent $\Lambda$ is \emph{falsified} in $M$ with $I$ \ifandonlyif $M, I \not\models \Lambda$, that is, $\Lambda$ is not satisfied by $M$ with $I$.

Last, a labelled sequent $\Lambda$ is \emph{$\axs$-valid}, written $\models_{\axs} \Lambda$, \ifandonlyif it is satisfiable in every bi-relational $\axs$-model $M$ with every interpretation function $I$. We say that a labelled sequent $\Lambda$ is \emph{$\axs$-invalid} \ifandonlyif $\not\models_{\axs} \Lambda$, i.e. $\Lambda$ is not $\axs$-valid.
\end{definition}

\begin{theorem}[$\lika$ Soundness and Completeness]\label{thm:lika-sound-complete}
$\rel, \Gamma \vdash w : A$ is derivable in $\lika$ \ifandonlyif $\rel, \Gamma \vdash w : A$ is $\axs$-valid.
\end{theorem}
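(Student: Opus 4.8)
The plan is to prove both directions separately, with soundness (right-to-left contrapositive, i.e. derivability implies validity) by induction on the height of the derivation, and completeness (validity implies derivability) by appeal to the soundness and completeness of the axiomatization $\h\ika$ from Theorem~\ref{thm:sound-complete-ikma} together with a faithful simulation of the Hilbert system inside $\lika$.

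For soundness, I would show by induction on the height of a $\lika$-derivation of $\rel,\Gamma\vdash w:A$ that this sequent is $\axs$-valid. In the base case, the initial rules $\id$ and $\botl$ give sequents that are trivially satisfied in every bi-relational $\axs$-model under every interpretation $I$ (for $\id$, because $p\in V(I(w))$ follows from $M,I\models w:p$; for $\botl$, because the antecedent $M,I\models w:\bot$ is impossible by Definition~\ref{def:semantic-clauses}). For the inductive step, one treats each rule, assuming the premise(s) $\axs$-valid and deriving the conclusion's $\axs$-validity; the propositional rules are routine, and the modal rules $\dial,\diar,\boxr,\boxl$ use the semantic clauses for $\dia$ and $\Box$ together with the eigenvariable condition $\dag$ (one extends $I$ to the fresh label). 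The crucial cases are the structural rules: for $\D$ one uses the seriality frame condition $\forall w\exists u\,(wRu)$ to extend $I$ appropriately, and for $\sa$ one uses the frame condition $\forall w,u,v\,(wR^{n}u\wedge wR^{k}v\iimp uRv)$ attached to $\phi(n,k)\in\axs$ to conclude that whenever the premise's relational atoms are satisfied, the atom $uRv$ added in the premise is automatically satisfied, so satisfaction transfers downward. The boundary cases $n=0$, $k=0$, $n=k=0$ use the corresponding specialized conditions listed in Figure~\ref{fig:axioms-related-conditions} and the rules $(S_{0,k})$, $(S_{n,0})$, $(S_{0,0})$; these are notational variants and cause no real difficulty.

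For completeness, I would argue that if $\rel,\Gamma\vdash w:A$ is $\axs$-valid then it is $\lika$-derivable. The cleanest route is: first reduce to the case of a sequent with empty relational part and empty antecedent, i.e. $\vdash w:A$; indeed, using invertibility/admissibility of the obvious structural manipulations one can show that $\rel,\Gamma\vdash w:A$ is derivable iff a corresponding "implicational" labelled formula is derivable as $\vdash w:B$ for a suitable $B\in\lang$ built from $\Gamma$ and $A$, and $\axs$-validity of the sequent corresponds to $\axs$-validity of $B$. Then $\axs$-validity of $B$ gives, by Theorem~\ref{thm:sound-complete-ikma}, that $B$ is a theorem of $\h\ika$; finally I simulate the Hilbert proof of $B$ inside $\lika$: each axiom A0–A5, $\axd$, and each $\axhsl$ is shown $\lika$-derivable (for the HSLs one invokes the $\sa$ rule, and for $\axd$ the $\D$ rule), and the rules $(mp)$ and $(nec)$ are simulated using cut and the $\boxr$ rule respectively. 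Since this is Simpson's result transported to our equivalent presentation, I would in fact simply cite the corresponding completeness argument of~\cite{Sim94} and note that the only change—copying principal formulae into premises—preserves derivability in both directions via weakening, which is height-preserving admissible in $\lika$.

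The main obstacle is the completeness direction, specifically the faithful simulation of cut (needed to mimic $(mp)$). Since $\lika$ as presented is a cut-free calculus, one must either (i) establish cut-admissibility for $\lika$ directly—a substantial syntactic argument by double induction on cut-height and cut-formula complexity, complicated by the structural rules $\D$ and $\sa$—or (ii) sidestep it by invoking the corresponding admissibility/completeness results already proved in~\cite{Sim94} for $\mathbf{L}_{\Box\Diamond}(\mathcal{T})$ and verifying that our minor reformulation is inter-derivable with Simpson's. I expect the paper takes route (ii): the statement cites Simpson, so the proof will observe that the systems are equivalent up to the cosmetic change of copying principal formulae, whence soundness and completeness are inherited from Theorem~\ref{thm:sound-complete-ikma} and Simpson's cut-elimination theorem.
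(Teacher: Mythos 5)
Your proposal is correct and, in its final paragraph, lands exactly on what the paper does: the proof is nothing more than a citation of Simpson's Theorems 7.2.1 and 8.1.4, with the equivalence to Simpson's $\mathbf{L}_{\Box\Diamond}(\mathcal{T})$ (up to copying principal formulae into premises) treated as immediate, just as you describe in your route (ii). The longer soundness induction and Hilbert-simulation sketch you outline (with its cut/admissibility worry) is therefore unnecessary overhead relative to the paper's argument, though it would be the right skeleton for a self-contained proof.
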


\begin{proof}
Follows from \thm~7.2.1 and \thm~8.1.4 of~\cite{Sim94}.
\qed
\end{proof}


\section{Structural Refinement}\label{sec:refinement}

We show how to \emph{structurally refine} the labelled systems introduced in the previous section, that is, we implement a methodology introduced and applied in~\cite{CiaLyoRamTiu21,Lyo21,Lyo21thesis,LyoBer19} (referred to as \emph{structural refinement}, or \emph{refinement} more simply) for simplifying labelled systems and/or permitting the extraction of nested systems. The methodology consists of eliminating structural rules (viz. the $\sa$ rules in our setting) through the addition of \emph{propagation rules} (cf.~\cite{CasCerGasHer97,Fit72,Sim94}) to the labelled calculi, begetting systems that are translatable into nested systems.

The propagation rules we introduce are based on those of~\cite{CiaLyoRamTiu21,GorPosTiu11,Lyo21thesis,LyoBer19,TiuIanGor12}, and operate by viewing a labelled sequent as an automaton, allowing for the propagation of a formula (when applied bottom-up) from a label $w$ to a label $u$ given that a certain path of relational atoms exists between $w$ and $u$ (corresponding to a string generated by an $\axs$-grammar). We note that Simpson likewise introduced a variation of these rules, named $(\dia R)_{\mathcal{T}_{H}}$ and $(\Box L)_{\mathcal{T}_{H}}$ (see~\cite[p.~126]{Sim94}), by closing the relational atoms of a sequent under the frame conditions related to each $\axhsl$ $\hsl \in \axs$. We opt to use propagation rules based on formal grammars however because such rules permit the formulation of nested systems outside the class of HSL extensions of $\ik$, thus setting the stage for the construction of nested systems for even broader classes of logics in future work.\footnote{For instance, we could define our propagation rules relative to the formal grammar $\{\fd \pto \fd \bd\}$, which would give a calculus for a logic outside the class of HSL extensions of $\ik$.}

The definition of our propagation rules is built atop the notions introduced in the following two definitions: 


\begin{definition}[Propagation Graph]\label{def:propagation-graph} The \emph{propagation graph} $\prgr{\rel}$ of a multiset of relational atoms $\rel$ is defined recursively on the structure of $\rel$:
\begin{itemize}

\item $\prgr{\empseq} := (\emptyset, \emptyset)$;

\item $\prgr{wRu} := (\{w,u\}, \{(w,\fd,u),(u,\bd,w)\})$;

\item $\prgr{\rel_{1},\rel_{2}} := (V_{1}\cup V_{2}, E_{1} \cup E_{2}) \text{ where } PG_{x}(\rel_{i}) = (V_{i},E_{i})$.

\end{itemize}
We will often write $w \in \prgr{\rel}$ to mean $w \in \prgrdom$, and $(w,\ques,u) \in \prgr{\rel}$ to mean $(w,\ques,u) \in \prgredges$.
\end{definition}

\begin{definition}[Propagation Path]\label{def:propagation-path} 
 We define a \emph{propagation path from $w_{1}$ to $w_{n}$ in $\prgr{\rel} := (V,E)$} to be a sequence of the following form:
$$
\ppath(w_{1},w_{n}) := w_{1}, \ques_{1}, w_{2}, \ques_{2}, \ldots , \ques_{n-1}, w_{n}
$$
such that $(w_{1}, \ques_{1}, w_{2}) , (w_{2}, \ques_{2}, w_{3}), \ldots, (w_{n-1}, \ques_{n-1}, w_{n}) \in \prgredges$. Given a propagation path of the above form, we define its \emph{converse} as shown below top and its \emph{string} as shown below bottom:
$$
\conv{\ppath}(w_{n},w_{1}) := w_{n}, \conv{\ques}_{n-1}, w_{n-1}, \conv{\ques}_{n-2}, \ldots, \conv{\ques}_{1}, w_{1}
$$
$$
\stra_{\ppath}(w_{1},w_{n}) := \ques_{1} \cate \ques_{2} \cate \cdots \cate \ques_{n-1}
$$
Last, we let $\emppath(w,w) := w$ represent an \emph{empty path} with the string of the empty path defined as $\stra_{\emppath}(w,w) := \empstr$.
\end{definition}





We are now in a position to define the operation of our propagation rules $\prdia$ and $\prbox$, which are displayed in \fig~\ref{fig:propagation-rules}. Each propagation rule $\prdia$ and $\prbox$ is applicable only if \emph{there exists a propagation path $\ppath(w,u)$ from $w$ to $u$ in the propagation graph $\prgr{\rel}$ such that the string $\stra_{\ppath}(w,u)$ is in the language $\glang(\fd)$.} We express this statement compactly by making use of its equivalent first-order representation:
$$
\exists \ppath(w,u) \in \prgr{\rel} (\stra_{\ppath}(w,u) \in \glang(\fd))
$$
We provide further intuition 
 regarding such rules by means of an example:

\begin{example}\label{ex:propagation-graph-path} Let $\rel := vRu, uRw$. We give a graphical depiction of $\prgr{\rel}$:

\begin{center}
\begin{minipage}[t]{.5\textwidth}
\xymatrix{
  v \ar@/^1pc/@{.>}[rr]|-{\dia} & &  u\ar@/^1pc/@{.>}[rr]|-{\fd} \ar@/^1pc/@{.>}[ll]|-{\bd} & &  w \ar@/^1pc/@{.>}[ll]|-{\bd}
}
\end{minipage}
\begin{minipage}[t]{.5\textwidth}
\begin{tabular}{c}
\AxiomC{ }
\noLine
\UnaryInfC{$\Lambda := vRu, uRw, w : \Box p, u : p \sar v : p \iimp q$}
\DisplayProof
\end{tabular}
\end{minipage}
\end{center}

Let $\axs := \{(\dia^{2} \Box A \iimp \Box^{1} A) \land (\dia^{1} A \iimp \Box^{2} \dia A)\}$, so that the corresponding $\axs$-grammar is $\g{\axs} = \{\fd \pto \bd \bd \fd, \bd \pto \bd \fd \fd \}$. Then, the path $\ppath(w,u) := w, \bd, u, \bd, v, \fd, u$ exists between $w$ and $u$. The first production rule of $\g{\axs}$ implies that $\stra_{\ppath}(w,u) = \bd \bd \fd \in \glang(\fd)$. Therefore, we are permitted to (top-down) apply the propagation rule $\prbox$ to $\Lambda$ to delete the labelled formula $u : p$, letting us derive $vRu, uRw, w : \Box p \sar v : p \iimp q$ .
\end{example}

\begin{figure}[t]
\noindent\hrule

\begin{center}
\AxiomC{$\rel, \Gamma \sar u : A$}
\RightLabel{$\prdia~\textit{ only if }~\exists \ppath(w,u) \in \prgr{\rel} (\stra_{\ppath}(w,u) \in \glang(\fd))$}
\UnaryInfC{$\rel, \Gamma \sar w : \dia A$}
\DisplayProof
\end{center}
\begin{center}
\AxiomC{$\rel, \Gamma, w : \Box A, u : A \sar v : B$}
\RightLabel{$\prbox~\textit{ only if }~\exists \ppath(w,u) \in \prgr{\rel} (\stra_{\ppath}(w,u) \in \glang(\fd))$}
\UnaryInfC{$\rel, \Gamma, w : \Box A \sar v : B$}
\DisplayProof
\end{center}

\hrule
\caption{Propagation rules.}
\label{fig:propagation-rules}
\end{figure}

\begin{remark}\label{rem:diar-boxl-propagation-rule-instances}
The $\diar$ and $\boxl$ rules are instances of $\prdia$ and $\prbox$, respectively.
\end{remark}

\begin{definition}[Refined Labelled Calculus] We define the \emph{refined labelled calculus} $\ikal := \lika + \{\prdia,\prbox\} - \{\sa \ | \ \hsl \in \axs\}$.
\end{definition}

We show that each calculus $\ikal$ is complete by means of a proof transformation procedure. That is, we show that through the elimination of structural rules we can transform a proof in $\lika$ into a proof in $\ikal$. We note that Simpson proved a similar result, showing that labelled derivations with structural rules are transformable into derivations with his propagation rules $(\dia R)_{\mathcal{T}_{H}}$ and $(\Box L)_{\mathcal{T}_{H}}$ (see~\cite[\sect~7.2]{Sim94}). In our context, the proof of structural rule eliminability requires more complex methods however due to the use of our new propagation rules that are parameterized with formal grammars. We first prove two crucial lemmata, and then show the elimination result.

\begin{lemma}\label{lem:permutation-1} Let $\rel_{1} := \rel, w R^{n} u, w R^{k} v, u R v$ and $\rel_{2} := \rel, w R^{n} u, w R^{k} v$. Suppose we are given a derivation in $\lika + \{\prdia, \prbox\}$ ending with:
\begin{center}
\begin{tabular}{c}
\AxiomC{$\rel, w R^{n} u, w R^{k} v, u R v,  \Gamma \sar z : A$}
\RightLabel{$\prdia$}
\UnaryInfC{$\rel, w R^{n} u, w R^{k} v, u R v,  \Gamma \sar x : \dia A$}
\RightLabel{$\sa$}
\UnaryInfC{$\rel, w R^{n} u, w R^{k} v, \Gamma \sar x : \dia A$}
\DisplayProof
\end{tabular}
\end{center}
where the side condition $\exists \ppath(x,z) \in \prgr{\rel_{1}} (\stra_{\ppath}(x,z) \in \glang(\fd))$ holds due to $\prdia$. Then, $\exists \ppath'(x,z) \in \prgr{\rel_{2}} (\stra_{\ppath'}(x,z) \in \glang(\fd))$, that is to say, the $\sa$ rule is permutable with the $\prdia$ rule. 
\end{lemma}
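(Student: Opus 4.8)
The plan is to read this as a fact purely about the side condition of $\prdia$: the displayed derivation only pins down the relevant rule instances, and what must be shown is that a witnessing propagation path in $\prgr{\rel_1}$ can be converted into a witnessing path in $\prgr{\rel_2}$. Since $\rel_1 = \rel_2, uRv$, Definition~\ref{def:propagation-graph} gives that $\prgr{\rel_1}$ and $\prgr{\rel_2}$ have the same vertices and that $\prgr{\rel_1}$ has exactly the two extra edges $(u,\fd,v)$ and $(v,\bd,u)$ coming from $uRv$ (possibly already present in $\prgr{\rel_2}$). Starting from the given path $\ppath(x,z)$ in $\prgr{\rel_1}$ with $\stra_{\ppath}(x,z) \in \glang(\fd)$ --- equivalently $\fd \dr \stra_{\ppath}(x,z)$ by Definition~\ref{def:derivation-language} --- I would build $\ppath'(x,z)$ by splicing a fixed detour through $w$ in place of every traversal of one of these two edges (if $\ppath$ uses neither, take $\ppath' := \ppath$).

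The detours come straight from the atoms $wR^{n}u$ and $wR^{k}v$ that survive in $\rel_2$: in $\prgr{\rel_2}$ the converse of the $wR^{n}u$-chain is a path from $u$ to $w$ with string $\bd^{n}$, and the $wR^{k}v$-chain is a path from $w$ to $v$ with string $\fd^{k}$, so their concatenation is a path from $u$ to $v$ with string $\bd^{n}\fd^{k}$, and symmetrically there is a path from $v$ to $u$ with string $\bd^{k}\fd^{n}$. (When $n = 0$ read $u = w$ and drop the first leg; when $k = 0$ read $v = w$; when $n = k = 0$ the detour is the one-point path $\emppath(w,w)$ with string $\empstr$.) Replacing each occurrence of $(u,\fd,v)$ in $\ppath(x,z)$ by the first detour and each occurrence of $(v,\bd,u)$ by the second yields a sequence $\ppath'(x,z)$ all of whose edges lie in $\prgr{\rel_2}$ and whose endpoints agree at every junction, so it is a genuine propagation path in $\prgr{\rel_2}$.

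Finally, since $\sa$ is a rule of the calculus we have $\hsl \in \axs$, so by Definition~\ref{def:grammar} the production rules $\fd \pto \bd^{n}\fd^{k}$ and $\bd \pto \bd^{k}\fd^{n}$ both lie in $\g{\axs}$. The surgery above replaces, in $\stra_{\ppath}(x,z)$, each letter $\fd$ originating from a $(u,\fd,v)$-traversal by $\bd^{n}\fd^{k}$ and each letter $\bd$ originating from a $(v,\bd,u)$-traversal by $\bd^{k}\fd^{n}$ --- that is, $\stra_{\ppath'}(x,z)$ is reached from $\stra_{\ppath}(x,z)$ by applying these two production rules at the corresponding positions. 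Hence $\fd \dr \stra_{\ppath}(x,z) \dr \stra_{\ppath'}(x,z)$, so $\stra_{\ppath'}(x,z) \in \glang(\fd)$, which is exactly the side condition permitting $\prdia$ over $\rel_2$ and thus the claimed permutation. I expect the only delicate point to be keeping the path surgery and the string rewriting in lockstep via an explicit correspondence between edge-traversals and letter-occurrences, together with recording that the degenerate cases $n = 0$, $k = 0$, and $n = k = 0$ are absorbed by letting the relevant detour leg collapse to the empty path.
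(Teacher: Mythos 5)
Your proposal is correct and follows essentially the same route as the paper's proof: split on whether the $uRv$-edges are used, splice in the detour through $w$ with strings $\bd^{n}\fd^{k}$ and $\bd^{k}\fd^{n}$ obtained from $wR^{n}u$ and $wR^{k}v$, and then recover derivability of the new string by applying the productions $\fd \pto \bd^{n}\fd^{k}$ and $\bd \pto \bd^{k}\fd^{n}$ guaranteed by Definition~\ref{def:grammar}. Your explicit treatment of the degenerate cases $n=0$, $k=0$ is a small addition the paper leaves implicit, but the argument is the same.
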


\begin{proof} We have two cases: either (i) the relational atom $u R v $ is not active in the $\prdia$ inference, or (ii) it is. Since (i) is easily resolved, we show (ii).


 Let us suppose that the relational atom $u R v$ is active in $\prdia$, i.e. $u R v$ occurs along the propagation path $\ppath(x,z)$. To prove the claim, we need to show that $\exists \ppath'(x,z) \in \prgr{\rel_{2}} (\stra_{\ppath'}(x,z) \in \glang(\fd))$. Therefore, we construct such a propagation path by performing the following operations on $\ppath(x,z)$:
\begin{itemize}

\item replace each occurrence of $u, \dia, v$ in $\prgr{\rel_{1}}$ with

$$
u, \bd, u_{1}, \ldots, u_{n-1}, \bd, w, \fd, w_{1}, \ldots, w_{k-1}, \fd, v;
$$

\item replace each occurrence of $v, \bd, u$ in $\prgr{\rel_{1}}$ with 

$$
v, \bd, w_{k-1}, \ldots, w_{1}, \bd, w, \fd, u_{n-1}, \ldots, u_{1}, \fd, u.
$$
\end{itemize}
We let $\ppath'(x,z)$ denote the path obtained by performing the above operations on $\ppath(x,z)$, and note that first half of the first propagation path and the second half of the second propagation path correspond to the edges $(u, \bd, u_{1}), \ldots, (u_{n-1},\bd,w) \in \prgr{\rel_{1}}$ and $(w, \fd, u_{n-1}), \ldots, (u_{1},\fd,u) \in \prgr{\rel_{1}}$, respectively, obtained from the relational atoms $w R^{n} u \in \rel_{1}$, whereas the second half of the first propagation path and the first half of the second propagation path correspond to the edges $(w,\fd, w_{1}), \ldots, (w_{k-1},\fd, v) \in \prgr{\rel_{1}}$ and $(v,\bd, w_{k-1}), \ldots, (w_{1},\bd, w) \in \prgr{\rel_{1}}$, respectively, obtained from the edges $w R^{k} v \in \rel_{1}$ (by \dfn~\ref{def:propagation-graph}). Since the sole difference between $\prgr{\rel_{1}}$ and $\prgr{\rel_{2}}$ is that the former is guaranteed to contain the edges $(u,\fd,v)$ and $(v,\bd,u)$ obtained from $u R v$, while the latter is not, and since $\ppath'(x,z)$ omits the use of such edges (i.e. $u, \fd, v$ and $v,\bd, u$ do not occur in $\ppath'(x,z)$), we have that $\ppath'(x,z)$ is a propagation path in $\prgr{\rel_{2}}$.

To complete the proof, we need to additionally show that $\stra_{\ppath'}(x,z) \in \glang(\fd)$. By assumption, $\stra_{\ppath}(x,z) \in \glang(\fd)$, which implies that $\fd \der \stra_{\ppath}(x,z)$ by \dfn~\ref{def:derivation-language}. Since $\sa$ is a rule in $\lika$, it follows that $\fd \pto \bd^{n} \cate \fd^{k}$ and $\bd \pto \bd^{k} \cate \fd^{n} \in \g{\axs}$ by \dfn~\ref{def:grammar}. If we apply  $\fd \pto \bd^{n} \cate \fd^{k}$ to each occurrence of $\fd$ in $\stra_{\ppath}(x,z)$ corresponding to the edge $(u,\fd,v)$ (and relational atom $u R v$), and apply $\bd \pto \bd^{k} \cate \fd^{n}$ to each occurrence of $\bd$ in $\stra_{\ppath}(x,z)$ corresponding to the edge $(v,\bd,u)$ (and relational atom $u R v$), we obtain the string $\stra_{\ppath'}(x,z)$ and show that $\fd \dr \stra_{\ppath'}(x,z)$, i.e. $\stra_{\ppath'}(x,z) \in \glang(\fd)$.
\qed
\end{proof}

\begin{lemma}\label{lem:permutation-2} Let $\rel_{1} := \rel, w R^{n} u, w R^{k} v, u R v$ and $\rel_{2} := \rel, w R^{n} u, w R^{k} v$. Suppose we are given a derivation in $\lika + \{\prdia, \prbox\}$ ending with:
\begin{center}
\begin{tabular}{c}
\AxiomC{$\rel, w R^{n} u, w R^{k} v, u R v, x : \Box A, y : A, \Gamma \sar z : C$}
\RightLabel{$\prbox$}
\UnaryInfC{$\rel, w R^{n} u, w R^{k} v, u R v, x : \Box A, \Gamma \sar z : C$}
\RightLabel{$\sa$}
\UnaryInfC{$\rel, w R^{n} u, w R^{k} v, x : \Box A, \Gamma \sar z : C$}
\DisplayProof
\end{tabular}
\end{center}
where the side condition $\exists \ppath(x,y) \in \prgr{\rel_{1}} (\stra_{\ppath}(x,y) \in \glang(\fd))$ holds due to $\prbox$. Then, $\exists \ppath'(x,y) \in \prgr{\rel_{2}} (\stra_{\ppath'}(x,y) \in \glang(\fd))$, that is to say, the $\sa$ rule is permutable with the $\prbox$ rule. 
\end{lemma}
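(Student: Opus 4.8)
The plan is to proceed exactly as in the proof of \lem~\ref{lem:permutation-1}, the only change being that the formula is now propagated by the $\prbox$ inference along a path from $x$ to $y$ rather than by a $\prdia$ inference along a path from $x$ to $z$; the side condition to be re-established has the same shape in both cases. As before, I would first split according to whether the relational atom $u R v$ deleted by $\sa$ is active in the $\prbox$ inference, i.e. whether the edge obtained from $u R v$ lies on the witnessing path $\ppath(x,y)$. If it does not, then $\ppath(x,y)$ uses only edges already present in $\prgr{\rel_{2}}$, so it serves unchanged as the required path $\ppath'(x,y)$, with the same string, and nothing further is needed. The interesting case is when $u R v$ is active.

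In that case I would carry out the same surgery on $\ppath(x,y)$ as in \lem~\ref{lem:permutation-1}: replace each occurrence of $u, \fd, v$ along the path with the detour
\[
u, \bd, u_{1}, \ldots, u_{n-1}, \bd, w, \fd, w_{1}, \ldots, w_{k-1}, \fd, v
\]
that travels backward through $w$ along $w R^{n} u$ and then forward along $w R^{k} v$, and, symmetrically, replace each occurrence of $v, \bd, u$ with the reversed detour. Calling the result $\ppath'(x,y)$, I would invoke \dfn~\ref{def:propagation-graph} to check that $\ppath'(x,y)$ is a genuine propagation path in $\prgr{\rel_{2}}$: every edge used by a detour arises from $w R^{n} u$ or $w R^{k} v$, both of which occur in $\rel_{2}$, whereas the only edges of $\prgr{\rel_{1}}$ missing from $\prgr{\rel_{2}}$, namely $(u,\fd,v)$ and $(v,\bd,u)$, have been removed by the surgery and hence no longer occur in $\ppath'(x,y)$.

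It then remains to verify the string condition $\stra_{\ppath'}(x,y) \in \glang(\fd)$. Since $\sa$ is a rule of $\lika$, \dfn~\ref{def:grammar} yields the production rules $\fd \pto \bd^{n} \cate \fd^{k}$ and $\bd \pto \bd^{k} \cate \fd^{n}$ in $\g{\axs}$. By hypothesis $\stra_{\ppath}(x,y) \in \glang(\fd)$, i.e. $\fd \dr \stra_{\ppath}(x,y)$ by \dfn~\ref{def:derivation-language}; applying $\fd \pto \bd^{n} \cate \fd^{k}$ to each character of $\stra_{\ppath}(x,y)$ coming from the edge $(u,\fd,v)$ and $\bd \pto \bd^{k} \cate \fd^{n}$ to each character coming from $(v,\bd,u)$ rewrites $\stra_{\ppath}(x,y)$ into exactly $\stra_{\ppath'}(x,y)$, whence $\fd \dr \stra_{\ppath'}(x,y)$, as required.

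I do not anticipate a genuine obstacle, since \lem~\ref{lem:permutation-1} already establishes this pattern of argument; the point calling for the most care — exactly as there — is keeping explicit the correspondence between the characters of $\stra_{\ppath}(x,y)$ and the edges traversed by $\ppath(x,y)$, so that the rewriting of the path and the rewriting of its associated string proceed in step and no occurrence of $u, \fd, v$ or $v, \bd, u$ is overlooked.
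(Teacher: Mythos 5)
Your proposal is correct and follows essentially the same route as the paper, which simply proves this lemma by noting it is analogous to \lem~\ref{lem:permutation-1}: the same case split on whether $u R v$ is active, the same path surgery replacing $u,\fd,v$ and $v,\bd,u$ by detours through $w$, and the same use of the productions $\fd \pto \bd^{n} \cate \fd^{k}$ and $\bd \pto \bd^{k} \cate \fd^{n}$ to rewrite the string.
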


\begin{proof}
Similar to the proof of \lem~\ref{lem:permutation-1} above. 
\qed
\end{proof}

To improve the comprehensibility of the above lemmata, we provide an example of permuting an instance of the structural rule $\sa$ above an instance of a propagation rule.

\begin{example}\label{ex:permutation} Let $\axs := \{(\dia \Box A \iimp \Box A) \land (\dia A \iimp \Box \dia A)\}$ so that the $\axs$-grammar $\g{\axs} =  \{\fd \pto \bd \fd, \bd \pto \bd \fd\}$. In the top derivation below, we assume that $\prdia$ is applied due to the existence of the propagation path $\ppath(u,v) = u, \fd, v$ in $\prgr{wRu, wRv, uRv}$, where $\stra_{\ppath}(u,v)  = \fd \in \glang(\fd)$ by \dfn~\ref{def:derivation-language}. The propagation graph $\prgr{wRu, wRv, uRv}$ corresponding to the top sequent of the derivation shown below left is shown below right:

\begin{center}
\begin{minipage}{.45\textwidth}
\begin{center}
\begin{tabular}{c} 
\AxiomC{}
\RightLabel{$\id$}
\UnaryInfC{$wRu, wRv, uRv, u : p \sar u : p$}
\RightLabel{$\prdia$}
\UnaryInfC{$wRu, wRv, uRv, u : p \sar v : \fd p$}
\RightLabel{$(S_{1,1})$}
\UnaryInfC{$wRu, wRv, u : p \sar v : \fd p$}
\DisplayProof
\end{tabular}
\end{center}
\end{minipage}
\begin{minipage}{.05\textwidth}
\ \\
\end{minipage}
\begin{minipage}{.45\textwidth}
\begin{center}
\begin{tabular}{c}
\xymatrix{
w \ar@/^-1pc/@{.>}[dd]|-{\fd} \ar@/^1pc/@{.>}[drr]|-{\fd} &	& \\
 & &  v \ar@/^1pc/@{.>}[ull]|-{\bd}\ar@/^1pc/@{.>}[dll]|-{\bd} \\
u \ar@/^-1pc/@{.>}[uu]|-{\bd}\ar@/^1pc/@{.>}[urr]|-{\fd} & &  
}
\end{tabular}
\end{center}
\end{minipage}
\end{center}

If we apply $\fd \pto \bd \fd \in \g{\axs}$ to $\stra_{\ppath}(u,v) = \fd$, then we obtain the string $\bd \fd$. Hence, $\fd \dr \bd \fd$, i.e. $\bd \fd \in \glang(\fd)$, meaning that a propagation path $\ppath'(u,v)$ ($= u, \bd, w, \fd, v$) exists in $\prgr{wRu,wRv}$ such that $\stra_{\ppath'}(u,v) = \bd \fd \in \glang(\fd)$. We may therefore apply $(a_{1,1})$ and then $\prdia$ as shown below left; the propagation graph $\prgr{wRu,wRv}$ is shown below right:

\begin{center}
\begin{minipage}{.45\textwidth}
\begin{center}
\begin{tabular}{c} 
\AxiomC{}
\RightLabel{$\id$}
\UnaryInfC{$wRu, wRv, uRv, u : p \sar u : p$}
\RightLabel{$(S_{1,1})$}
\UnaryInfC{$wRu, wRv, u : p \sar u : p$}
\RightLabel{$\prdia$}
\UnaryInfC{$wRu, wRv, u : p \sar v : \fd p$}
\DisplayProof
\end{tabular}
\end{center}
\end{minipage}
\begin{minipage}{.05\textwidth}
\ \\
\end{minipage}
\begin{minipage}{.45\textwidth}
\begin{center}
\begin{tabular}{c}
\xymatrix{
w \ar@/^-1pc/@{.>}[dd]|-{\fd} \ar@/^1pc/@{.>}[drr]|-{\fd} &	& \\
 & &  v \ar@/^1pc/@{.>}[ull]|-{\bd} \\
u \ar@/^-1pc/@{.>}[uu]|-{\bd} & &  
}
\end{tabular}
\end{center}
\end{minipage}
\end{center}

\end{example}

\begin{theorem}\label{thm:gtkms-to-kmsl-kms}
Every derivation in $\lika$ can be algorithmically transformed into a derivation in $\ikal$.
\end{theorem}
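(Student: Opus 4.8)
The plan is to prove this by induction on the number of applications of the structural rules $\sa$ (for each $\hsl \in \axs$) appearing in a given $\lika$ derivation, eliminating them one at a time from the bottom. The base case is a derivation with zero structural rules, which is already a derivation in $\lika$ that uses no $\sa$ rule; since every $\diar$ and $\boxl$ instance is an instance of $\prdia$ and $\prbox$ respectively (\cref{rem:diar-boxl-propagation-rule-instances}), and since removing the $\sa$ rules only removes rules, such a derivation is immediately a derivation in $\ikal$. For the inductive step, consider a derivation in $\lika$ with $m+1$ applications of structural rules, and pick a topmost such application, say an instance of $\sa$ for some $\hsl = \phi(n,k) \in \axs$. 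Everything above this instance is a derivation using only logical and initial rules of $\lika$ together with $\prdia$ and $\prbox$ — that is, a derivation in $\lika + \{\prdia,\prbox\}$ with no structural rule — so we may freely permute this $\sa$ instance upward.

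The core of the argument is the permutation of this single $\sa$ instance above the rule immediately preceding it, repeated until the $\sa$ instance reaches the leaves. I would organize this as a secondary induction on the height of the subderivation sitting above the chosen $\sa$ instance. When the rule above the $\sa$ instance is a logical or initial rule acting on labelled formulae that does not touch the relational atoms $wR^{n}u$, $wR^{k}v$, $uRv$ singled out by $\sa$, the two rules permute routinely: the relational context is merely carried along, and in the case of rules with an eigenvariable side condition one renames the eigenvariable if necessary so that it avoids the new context, which is always possible since $\sa$ does not introduce fresh labels. When the rule above is a $\prdia$ or $\prbox$ whose side-condition path genuinely uses the atom $uRv$ contracted by $\sa$, we invoke \cref{lem:permutation-1} (for $\prdia$) or \cref{lem:permutation-2} (for $\prbox$): these lemmata guarantee that after pushing $\sa$ above the propagation rule, a suitable propagation path still exists in the smaller propagation graph $\prgr{\rel_2}$ with a string still in $\glang(\fd)$, so the propagation rule remains applicable with the diminished relational context. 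The remaining case, where the $\sa$ instance meets an initial rule or another structural rule of an ancestor, is handled directly — against $\id$ or $\botl$ the conclusion remains an axiom after deleting $uRv$, and two structural rules simply commute since neither deletes labels the other needs.

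Iterating these single-step permutations drives the $\sa$ instance to the top of its subderivation, where it sits immediately above leaves ($\id$, $\botl$) and can be deleted outright: an axiom with $uRv$ removed from its relational context is still an axiom. This removes one structural rule from the derivation without introducing any new ones, so the outer induction hypothesis applies and yields a derivation in $\ikal$; moreover every step of this procedure is effective, so the transformation is algorithmic. The main obstacle is the propagation-rule case, but the heavy lifting there has already been done in \cref{lem:permutation-1} and \cref{lem:permutation-2}; the one subtlety that remains is to check that when the permuted $\sa$ instance still has the atoms $wR^{n}u$ and $wR^{k}v$ available in the premises of the intervening rules — which it does, since those atoms are never principal in logical rules and are preserved by them — so that the side conditions of the lemmata are legitimately met at each application.
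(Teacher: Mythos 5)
Your proposal is correct and follows essentially the same route as the paper: treat $\diar$ and $\boxl$ as instances of the propagation rules via the remark, permute a topmost $\sa$ instance upward (routinely past rules with no active relational atoms in their conclusion, and via Lemmas~\ref{lem:permutation-1} and~\ref{lem:permutation-2} past $\prdia$ and $\prbox$), and delete it once it reaches $\id$ or $\botl$, where the result is still an axiom. The only differences are cosmetic (your outer induction on the number of $\sa$ instances with an inner induction on the height of the subderivation above the chosen topmost instance, versus the paper's single induction on height), so no further comment is needed.
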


\begin{proof} We consider a derivation in $\lika$, which is a derivation in $\lika + \{\prdia, \prbox\}$. By Remark~\ref{rem:diar-boxl-propagation-rule-instances}, each instance of $\diar$ and $\boxl$ can be replaced by a $\prdia$ or $\prbox$ instance, respectively, meaning we may assume our derivation in $\lika + \{\prdia, \prbox\}$ is free of $\diar$ and $\boxl$ instances. We show that the derivation can be transformed into a derivation in $\ikal$ by induction on its height, that is, we consider a topmost occurrence of a structural rule $\sa$ and show that it can be eliminated. We obtain a derivation in $\ikal$ by successively eliminating topmost instances of $\sa$ rules.

\textit{Base case.} Observe that any application of $\sa$ to $\id$ or $\botl$ yields another instance of the rule.

\textit{Inductive step.} It is straightforward to verify that any instance of $\sa$ freely permutes above instances of all rules in $\lika + \{\prdia, \prbox\}$ with the exception of $\sa$, $\prdia$, and $\prbox$ (this follows from the fact that all other rules do not have active relational atoms in their conclusion). Since we are considering a topmost application of $\sa$, we need not consider the permutation of $\sa$ above another instance of $\sa$. The last two cases of permuting $\sa$ above $\prdia$ and $\prbox$ follow from \lem~\ref{lem:permutation-1} and~\ref{lem:permutation-2}, respectively. 
\qed
\end{proof}

\begin{theorem}[$\ikal$ Soundness and Completeness]\label{thm:ikal-sound-complete}
$\rel, \Gamma \vdash w : A$ is derivable in $\ikal$ \ifandonlyif $\rel, \Gamma \vdash w : A$ is $\axs$-valid.
\end{theorem}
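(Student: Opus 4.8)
The plan is to prove the two directions essentially independently: completeness falls out of results already in hand, while soundness needs one genuinely new lemma about propagation paths. For completeness, suppose $\rel, \Gamma \vdash w : A$ is $\axs$-valid; by the completeness half of \thm~\ref{thm:lika-sound-complete} it is derivable in $\lika$, and \thm~\ref{thm:gtkms-to-kmsl-kms} then algorithmically turns that derivation into one in $\ikal$. So no new work is required here.

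For soundness I would argue by induction on the height of an $\ikal$ derivation that its conclusion is $\axs$-valid, in contrapositive form: if the conclusion of the last rule is falsified in a bi-relational $\axs$-model $M$ under an interpretation $I$, then so is one of its premises. Since $\ikal = \lika + \{\prdia, \prbox\} - \{\sa \mid \gmp \in \axs\}$, the initial sequents $\id$, $\botl$ are already $\axs$-valid and every remaining rule of $\ikal$ other than $\prdia$ and $\prbox$ is a rule of $\lika$ and hence validity-preserving by the soundness half of \thm~\ref{thm:lika-sound-complete}, so only $\prdia$ and $\prbox$ remain. Both reduce to a \emph{path-soundness lemma}: if $M$ is a bi-relational $\axs$-model, $I$ an interpretation with $M, I \models rRs$ for all $rRs \in \rel$, and $\ppath(x,y)$ is a propagation path in $\prgr{\rel}$ with $\stra_{\ppath}(x,y) \in \glang(\fd)$, then $I(x) R I(y)$. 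To prove this I would call a string $\numseq$ a \emph{connector} from $a$ to $b$ in $M$ if there are worlds $a = c_{0}, \ldots, c_{m} = b$ such that the $i$-th character of $\numseq$ being $\fd$ forces $c_{i-1} R c_{i}$ and its being $\bd$ forces $c_{i} R c_{i-1}$; a genuine path in $\prgr{\rel}$ all of whose relational atoms are satisfied by $M, I$ (cf.\ \dfn~\ref{def:propagation-graph} and~\ref{def:propagation-path}) is exactly witnessed by a connector. It then suffices to show, by induction on the length of a derivation $\fd \dr \numseq$ in $\g{\axs}$, that any connector $\numseq$ from $a$ to $b$ forces $a R b$. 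The base case $\numseq = \fd$ is immediate; for the step one peels off the last one-step derivation $\fd \dr \numseq' \osdr \numseq$ with $\numseq' = \numseq_{1}\ques\numseq_{2}$, $\numseq = \numseq_{1}\strc\numseq_{2}$, $\ques \pto \strc \in \g{\axs}$, splits a connector for $\numseq$ at the block corresponding to $\strc$, and reduces to the claim \emph{if $\strc$ is a connector from $p$ to $q$ then so is $\ques$}. By \dfn~\ref{def:grammar} the productions are precisely $\fd \pto \bd^{n}\fd^{k}$ and $\bd \pto \bd^{k}\fd^{n}$ for $\gmp \in \axs$, so unfolding "connector" for $\strc = \bd^{n}\fd^{k}$ produces a world $e$ with $e R^{n} p$ and $e R^{k} q$, and the frame condition $\forall w, u, v(w R^{n} u \land w R^{k} v \iimp u R v)$ of \fig~\ref{fig:axioms-related-conditions} gives $p R q$; the case $\strc = \bd^{k}\fd^{n}$ is symmetric, and the degenerate subcases $n = 0$, $k = 0$, $n = k = 0$ use the specialized frame conditions in the caption of \fig~\ref{fig:axioms-related-conditions}. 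Hence $\numseq'$ is a connector from $a$ to $b$, the inner induction hypothesis yields $a R b$, and applying the lemma to $\numseq = \stra_{\ppath}(x,y)$ gives $I(x) R I(y)$.

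With the lemma in hand the two rule cases are routine. For $\prdia$: if $M, I$ falsifies $\rel, \Gamma \sar w : \dia A$ it satisfies $\rel$ and $\Gamma$ but $M, I(w) \not\Vdash \dia A$; the lemma gives $I(w) R I(u)$, so $M, I(u) \not\Vdash A$ and the premise $\rel, \Gamma \sar u : A$ is falsified. For $\prbox$: if $M, I$ falsifies $\rel, \Gamma, w : \Box A \sar v : B$ it satisfies $\rel$, $\Gamma$, $w : \Box A$ but $M, I(v) \not\Vdash B$; the lemma gives $I(w) R I(u)$, and since $\leq$ is reflexive, the $\Box$-clause of \dfn~\ref{def:semantic-clauses} applied to $M, I(w) \Vdash \Box A$ forces $M, I(u) \Vdash A$, so the premise $\rel, \Gamma, w : \Box A, u : A \sar v : B$ is falsified. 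This closes the induction, hence the proof.

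I expect the path-soundness lemma to be the main obstacle, in particular making the bookkeeping of forward versus backward traversals line up exactly with the Horn frame conditions (including the corner cases $n = 0$, $k = 0$, $n = k = 0$). A viable alternative that sidesteps the semantic detour is to show directly that each application of $\prdia$ (resp.\ $\prbox$) meeting its side condition is derivable in $\lika$ by a block of $\sa$ rules followed by $\diar$ (resp.\ $\boxl$), so that soundness of $\ikal$ reduces to that of $\lika$; its combinatorial core---converting a $\g{\axs}$-derivation into a sequence of $\sa$ steps---is the same, so one would adopt whichever presentation best fits the surrounding development.
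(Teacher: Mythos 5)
Your proposal matches the paper's proof: completeness is obtained exactly as in the paper from \thm~\ref{thm:lika-sound-complete} together with the proof translation of \thm~\ref{thm:gtkms-to-kmsl-kms}, and soundness by induction on the height of the $\ikal$ derivation, which the paper only sketches. Your path-soundness lemma (a grammar-induction showing that a propagation path whose string lies in $\glang(\fd)$ forces $I(x) R I(y)$ via the Horn frame conditions of \fig~\ref{fig:axioms-related-conditions}) is precisely the detail the paper leaves implicit for the $\prdia$ and $\prbox$ cases, and it is correct, including the reliance on reflexivity of $\leq$ in the $\prbox$ case and the degenerate cases $n=0$, $k=0$.
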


\begin{proof}
The forward direction (soundness) is shown by induction on the height of the given derivation, and the backward direction (completeness) follows from \thm~\ref{thm:lika-sound-complete} and~\ref{thm:gtkms-to-kmsl-kms}. 
\qed
\end{proof}

\section{Nested Sequent Systems}\label{sec:nested-calculi}


In our setting, nested sequents are taken to be trees of multisets of formulae containing a unique formula that occupies a special status. We utilize the nested sequents of~\cite{Str13}, but note that the data structure underlying such sequents was originally used in~\cite{GalSal10}, and is similar to the nested sequents for classical modal logics employed in~\cite{Bru09}. Following~\cite{Str13}, we mark the special, unique formula with a white circle $\outp$ indicating that the formula is of \emph{output polarity}, and mark the other formulae with a black circle $\inp$ indicating that the formulae are of \emph{input polarity}. A nested sequent $\ns$ is defined via the following BNF grammars:
$$
\ns ::= \Delta, \Pi \qquad \Delta ::= A_{1}^{\inp}, \ldots, A_{n}^{\inp}, \bl \Delta_{1} \br, \ldots, \bl \Delta_{k} \br \qquad \Pi ::= A^{\outp} \ | \ \bl \ns \br
$$

We assume that the comma operator associates and commutes, implying that such sequents are truly trees of multisets of formulae, and we let the \emph{empty sequent} be the empty multiset $\empseq$. We refer to a sequent in the shape of $\Delta$ (which contains only input formulae) as an \emph{LHS-sequent}, a sequent in the shape of $\Pi$ 
 as an \emph{RHS-sequent}, and a sequent $\ns$ as a \emph{full sequent}. We use both $\ns$ and $\Delta$ to denote LHS- and full sequents with the context differentiating the usage.

As for classical modal logics (e.g.~\cite{Bru09,GorPosTiu11}), we define a \emph{context} $\ns\{ \ \}\cdots\{ \ \}$ to be a nested sequent with some number of holes $\{ \ \}$ in the place of formulae. This gives rise to two types of contexts: \emph{input contexts}, which require holes to be filled with LHS-sequents to obtain a full sequent, and \emph{output contexts}, which require a single hole to be filled with an RHS-sequent and the remaining holes to be filled with LHS-sequents to obtain a full sequent. We also define the \emph{output pruning} of an input context $\ns\{ \ \}\cdots\{ \ \}$ or full sequent $\ns$, denoted $\ns^{\downarrow}\{ \ \}\cdots\{ \ \}$ and $\ns^{\downarrow}$ respectively, to be the same context or sequent with the unique output formula deleted. We note that all of the above terminology is due to~\cite{Str13}.

\begin{example} Let $\ns_{1}\{\} := p^{\inp}, \bl \dia q^{\inp}, \{ \ \} \br$, $\ns_{2}\{\}:= p^{\inp}, \bl \dia q^{\outp}, \{ \ \} \br$, $\Delta_{1} := \bot^{\inp}, \bl q \iimp r^{\outp} \br$, and $\Delta_{2} := \bot^{\inp}, \bl q \iimp r^{\inp} \br$. Observe that neither $\ns_{1}\{\Delta_{2}\}$ nor $\ns_{2}\{\Delta_{1}\}$ are full sequents since the former has no output formula and the latter has two output formulae. Conversely, both $\ns_{1}\{\Delta_{1}\}$ and $\ns_{2}\{\Delta_{2}\}$ are full sequents.
\end{example}


Our nested sequent systems are presented in \fig~\ref{fig:nested-calculus} and are generalizations of those for the the logics of the intuitionistic modal cube given in~\cite{Str13}. For example, a nested sequent system for the intuitionistic modal logic $\ik + \{(\dia^{0} \Box A \iimp \Box^{3} A) \land (\dia^{3} A \iimp \Box^{0} \dia A)\}$ incorporating the 3-to-1 transitivity axiom, which falls outside the intuitionistic modal cube, is obtained by employing the $\axs$-grammar $\g{\axs} = \{\fd \pto \fd \fd \fd, \bd \pto \bd \bd \bd \}$ in the propagation rules $\prdia$ and $\prbox$. As in the previous section, our propagation rules $\prdia$ and $\prbox$ rely on auxiliary notions (e.g. propagation graphs and paths), which we define for nested sequents.

\begin{definition}[Propagation Graph/Path]\label{def:propagation-graph} Let $w$ be the label assigned to the root of the nested sequent $\ns$. We define the \emph{propagation graph} $PG(\ns) := PG_{w}(\ns)$ of a nested sequent $\ns$ recursively on the structure of the nested sequent.
\begin{itemize}

\item $PG_{u}(\empseq) := (\emptyset, \emptyset, \emptyset)$;

\item $PG_{u}(A) := (\emptyset, \emptyset, \{(u,A)\}) \text{ with } A \in \{A^{\inp}, A^{\outp}\}$;

\item $PG_{u}(\Delta_{1},\Delta_{2}) := (V_{1}\cup V_{2}, E_{1} \cup E_{2}, L_{1} \cup L_{2}) \text{ where } PG_{u}(\Delta_{i}) = (V_{i},E_{i},L_{i})$;

\item $PG_{u}(\bl \ns \br) := (V \cup \{u\}, E \cup \{(u,\fd,v),(v,\bd,u)\},L) \text{ where } PG_{v}(\ns) = (V,E,L)$ $\text{ and $v$ is fresh}$.

\end{itemize}
We will often write $u \in \prgr{\ns}$ to mean $u \in \prgrdom$, and $(u,\ques,v) \in \prgr{\ns}$ to mean $(u,\ques,v) \in \prgredges$. Also, we define \emph{propagation paths}, \emph{converses} of propagation paths, and the \emph{string} of a propagation path as in \dfn~\ref{def:propagation-path}.
\end{definition}

For input or output formulae $A$ and $B$, we use the notation $\ns\{A\}_{w}\{B\}_{u}$ to mean that $(w,A), (u,B) \in L$ in $\prgr{\ns}$. For example, if $\ns := p \iimp q^{\outp}, \bl p^{\inp}, \bl \Box p^{\inp} \br \br$ with $\prgr{\ns} := (V,E,L)$ and $(v,p \iimp q^{\outp}),(u, p^{\inp}), (w,\Box p^{\inp}) \in L$, then both $\ns\{p \iimp q^{\outp}\}_{v}\{\Box p^{\inp}\}_{w}$ and $\ns\{p^{\inp}\}_{u}\{p \iimp q^{\outp}\}_{v}$ are valid representations of $\ns$ in our notation.

\begin{figure}[t]
\noindent\hrule

\begin{center}
\begin{tabular}{c c c c} 

\AxiomC{}
\RightLabel{$\botin$}
\UnaryInfC{$\ns \sbl \bot^{\inp} \sbr$}
\DisplayProof

&

\AxiomC{}
\RightLabel{$\id$}
\UnaryInfC{$\ns \sbl p^{\inp}, p^{\outp} \sbr$}
\DisplayProof

&

\AxiomC{$\ns \sbl A^{\inp}, B^{\inp} \sbr$}
\RightLabel{$\conin$}
\UnaryInfC{$\ns \sbl A \land B^{\inp} \sbr$}
\DisplayProof

&

\AxiomC{$\ns \sbl A^{\outp} \sbr$}
\AxiomC{$\ns \sbl B^{\outp} \sbr$}
\RightLabel{$\conout$}
\BinaryInfC{$\ns \sbl A \land B^{\outp} \sbr$}
\DisplayProof
\end{tabular}
\end{center}

\begin{center}
\begin{tabular}{c c c c}



\AxiomC{$\ns \sbl A^{\inp} \sbr$}
\AxiomC{$\ns \sbl B^{\inp} \sbr$}
\RightLabel{$\disin$}
\BinaryInfC{$\ns \sbl A \lor B^{\inp} \sbr$}
\DisplayProof

&

\AxiomC{$\ns \sbl A^{\inp}, B^{\outp} \sbr$}
\RightLabel{$\iimpout$}
\UnaryInfC{$\ns \sbl A \iimp B^{\outp} \sbr$}
\DisplayProof

&

\AxiomC{$\ns \sbl A_{i}^{\outp} \sbr$}
\RightLabel{$\disout~i \in \{1,2\}$}
\UnaryInfC{$\ns \sbl A_{1} \lor A_{2}^{\outp} \sbr$}
\DisplayProof
\end{tabular}
\end{center}

\begin{center}
\begin{tabular}{c c c c}
\AxiomC{$\ns^{\downarrow} \sbl A \iimp B^{\inp}, A^{\outp} \sbr$}
\AxiomC{$\ns \sbl B^{\inp} \sbr$}
\RightLabel{$\iimpin$}
\BinaryInfC{$\ns \sbl A \iimp B^{\inp} \sbr$}
\DisplayProof

&

\AxiomC{$\ns \sbl \bl A^{\outp} \br \sbr$}
\RightLabel{$\boxout$}
\UnaryInfC{$\ns \sbl \Box A^{\outp} \sbr$}
\DisplayProof

&

\AxiomC{$\ns \sbl \bl A^{\inp} \br \sbr$}
\RightLabel{$\diain$}
\UnaryInfC{$\ns \sbl \dia A^{\inp} \sbr$}
\DisplayProof

&

\AxiomC{$\ns \sbl \bl \empseq \br \sbr$}
\RightLabel{$\D$}
\UnaryInfC{$\ns \sbl \empseq \sbr$}
\DisplayProof
\end{tabular}
\end{center}

\begin{center}
\AxiomC{$\ns \sbl \Delta_{1} \sbr_{w} \sbl A^{\outp}, \Delta_{2} \sbr_{u}$}
\RightLabel{$\prdia~\textit{ only if }~\exists \ppath(w,u) \in \prgr{\ns} (\stra_{\ppath}(w,u) \in \glang(\fd))$}
\UnaryInfC{$\ns \sbl \dia A^{\outp}, \Delta_{1} \sbr_{w} \sbl \Delta_{2} \sbr_{u}$}
\DisplayProof
\end{center}
\begin{center}
\AxiomC{$\ns \sbl \Box A^{\inp}, \Delta_{1} \sbr_{w} \sbl A^{\inp}, \Delta_{2} \sbr_{u}$}
\RightLabel{$\prbox~\textit{ only if }~\exists \ppath(w,u) \in \prgr{\ns} (\stra_{\ppath}(w,u) \in \glang(\fd))$}
\UnaryInfC{$\ns \sbl \Box A^{\inp}, \Delta_{1} \sbr_{w} \sbl \Delta_{2} \sbr_{u}$}
\DisplayProof
\end{center}

\hrule
\caption{The nested sequent calculi $\nika$. The $\D$ rule occurs in a calculus \ifandonlyif $\axd \in \axs$.}
\label{fig:nested-calculus}
\end{figure}

We now prove that proofs can be translated between our refined labelled and nested systems. In order to prove this fact, we make use of the following definitions, which are based on the work of~~\cite{GorRam12,Lyo21}.

\begin{definition}[Labelled Tree Sequent/Derivation]
We define a \emph{labelled tree sequent} to be a labelled sequent $\Lambda := \rel, \Gamma \sar w : A$ such that $\rel$ forms a tree and all labels in $\Gamma, w : A$ occur in $\rel$. We define a \emph{labelled tree derivation} to be a proof containing only labelled tree sequents. We say that a labelled tree derivation has the \emph{fixed root property} \ifandonlyif every labelled sequent in the derivation has the same root.
\end{definition}

We now define our translation functions which transform a \emph{full} nested sequent into a labelled tree sequent, and vice-versa. Our translations additionally depend on \emph{sequent compositions} and \emph{labelled restrictions}. If $\Lambda_{1} := \rel_{1}, \Gamma_{1} \sar \Gamma_{1}'$ and $\Lambda_{2} := \rel_{2}, \Gamma_{2} \sar \Gamma_{2}'$, then we define its sequent composition $\Lambda_{1} \seqcomp \Lambda_{2} := \rel_{1}, \rel_{2},\Gamma_{1}, \Gamma_{2} \sar \Gamma_{1}', \Gamma_{2}'$. Given that $\Gamma$ is a multiset of labelled formulae, we define the labelled restriction $\Gamma \restriction w := \{A \ | \ w : A \in \Gamma\}$, and we note that if $w$ is not a label in $\Gamma$, then $\Gamma \restriction w := \emptyset$. 
 Moreover, for a multiset $A_{1}, \ldots,A_{n}$ of formulae, we define $(A_{1}, \ldots,A_{n})^{\ast} := A_{1}^{\ast}, \ldots,A_{n}^{\ast}$ and $(\emptyset)^{\ast} := \emptyset$, where $\ast \in \{\inp,\outp\}$.

\begin{definition}[Translation $\ltr$] We define $\ltr_{w}(\ns) := \rel, \Gamma \sar u : A$ as follows:

\begin{center}
\begin{minipage}{.45\textwidth}
\begin{itemize}

\item $\ltr_{v}(\empseq) := \emptyset \sar \emptyset$

\item $\ltr_{v}(A^{\inp}) := v : A \sar \empseq$

\item $\ltr_{v}(A^{\outp}) := \empseq \sar v : A$

\end{itemize}
\end{minipage}
\begin{minipage}{.5\textwidth}
\begin{itemize}

\item $\ltr_{v}(\Delta_{1},\Delta_{2}) := \ltr_{v}(\Delta_{1}) \seqcomp \ltr_{v}(\Delta_{2})$

\item $\ltr_{v}(\bl \ns \br) := (vRu \sar \empseq) \seqcomp \ltr_{u}(\ns)$ $\text{ with }$ $\text{$u$ fresh}$

\end{itemize}
\end{minipage}
\end{center}

We note that since $\ns$ is a full sequent, the obtained labelled sequent will contain a single labelled formula in its consequent.
\end{definition}

\begin{example} We let $\ns := p \iimp q^{\outp}, \bl p^{\inp}, \bl \Box p^{\inp} \br \br$ and show the output labelled sequent under the translation $\ltr$.
$$
\ltr_{w}(\Sigma) = wRv,vRu, v : p, u : \Box p \sar w : p \iimp q
$$
\end{example}

\begin{definition}[Translation $\ntr$] Let $\Lambda := \rel, \Gamma \sar w : A$ be a labelled tree sequent with root $u$. We define $\Lambda_{1} \subseteq \Lambda$ \ifandonlyif there exists a labelled tree sequent $\Lambda_{2}$ such that $\Lambda = \Lambda_{1} \seqcomp \Lambda_{2}$. Let us further define $\Lambda_{u}$ to be the unique labelled tree sequent rooted at the label $u$ such that $\Lambda_{u} \subseteq \Lambda$. We define $\ntr(\Lambda) := \ntr_{u}(\Lambda)$ recursively on the tree structure of $\Lambda$:
\[
  \ntr_{v}(\Lambda) :=
  \begin{cases}
  (\Gamma \restriction v)^{\inp}, (w : A \restriction v)^{\outp} & \text{if $\rel = \empseq$}; \\
  (\Gamma \restriction v)^{\inp}, (w : A \restriction v)^{\outp}, \bl \ntr_{z_{1}}(\Lambda_{z_{1}}) \br, \ldots, \bl \ntr_{z_{n}}(\Lambda_{z_{n}}) \br & \text{otherwise}. 
  \end{cases}
\]
In the second case above, we assume that $vRz_{1}, \ldots vRz_{n}$ are all of the relational atoms occurring in the input sequent which have the form $vRx$.
\end{definition}

\begin{example} We let $\Lambda := wRv,vRu, v : p, u : \Box p \sar w : p \iimp q$ and show the output nested sequent under the translation $\ntr$.
$$
\ntr(\Lambda) = \ntr_{w}(\Lambda) = p \iimp q^{\outp}, \bl p^{\inp}, \bl \Box p^{\inp} \br \br
$$
\end{example}

\begin{lemma}\label{lem:labelled-tree-derivations}
Every proof in $\ikal$ of a labelled tree sequent is a labelled tree proof with the fixed root property.
\end{lemma}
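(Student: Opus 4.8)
The plan is to prove this by induction on the height of the given proof in $\ikal$, showing simultaneously that every sequent appearing is a labelled tree sequent and that the root label is the same throughout the derivation. The base case concerns the initial rules $\id$ and $\botl$: one checks directly that any instance of these has a conclusion of the form $\rel, \Gamma \sar w : A$ where $\rel$ is a tree (indeed, any $\rel$ can be restricted to its tree structure in the relevant instances, but more carefully, since we are considering a \emph{proof} the conclusion of the whole proof fixes which sequents can appear) — here the key observation is that the initial sequents that arise in an actual $\ikal$-proof inherit their relational structure from below, so it suffices to check that the rules preserve the tree-and-fixed-root invariant read bottom-up, and that this invariant is vacuously compatible with the axioms at the top.

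The heart of the argument is the inductive step, where for each rule of $\ikal$ (as listed in \fig~\ref{fig:labelled-calculi} minus the $\sa$ rules, plus $\prdia$ and $\prbox$ from \fig~\ref{fig:propagation-rules}) one verifies: (a) if the conclusion is a labelled tree sequent with root $u$, then so is each premise; and (b) the premises have the same root $u$. For the propositional and implication rules this is immediate since they neither add relational atoms nor change $\rel$. For $\dial$ and $\boxr$ (the rules with the eigenvariable side condition $\dag$), a premise adds a relational atom $wRu'$ with $u'$ fresh and $w$ already occurring in $\rel$; attaching a fresh leaf $u'$ below an existing node $w$ keeps $\rel$ a tree and does not disturb the root. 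For $\D$, similarly a fresh $wRu'$ is added. Crucially, the $\sa$ rules — which could in principle add a relational atom $uRv$ between two already-existing, possibly incomparable nodes, thereby destroying the tree shape — have been \emph{eliminated} in passing from $\lika$ to $\ikal$, so they never need to be considered. For $\prdia$ and $\prbox$ the relational atoms $\rel$ are literally unchanged from premise to conclusion, so the tree structure and root are trivially preserved; one only notes that the labelled formulas manipulated ($w : \dia A$ versus $u : A$, etc.) involve labels already in $\rel$, which is part of what "labelled tree sequent" requires and is guaranteed because $w, u$ lie on a propagation path in $\prgr{\rel}$ and hence are vertices of $\prgr{\rel}$, i.e. labels of $\rel$.

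The main obstacle, such as it is, is bookkeeping rather than conceptual: one must be careful that the definition of labelled tree sequent also demands that \emph{every} label occurring in $\Gamma$ or in the succedent actually occurs in $\rel$, and check this side condition is preserved by each rule — in particular by $\dial$, where reading top-down the label $u$ of the eigenvariable disappears from $\rel$ (as $wRu$ is removed) and one must confirm $u$ also disappears from $\Gamma$ and the succedent, which holds precisely because the rule is stated so that $u : A$ is the only labelled formula on $u$ and it is consumed, and $u$ does not occur in the conclusion by the eigenvariable condition $\dag$. Once these per-rule checks are assembled, the conclusion follows: by induction every sequent in the proof is a labelled tree sequent, and since each rule preserves the root, every sequent shares the root of the endsequent, giving the fixed root property. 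I would close with the remark that the reason this lemma holds for $\ikal$ but would \emph{fail} for $\lika$ is exactly the absence of the $\sa$ rules, which is the payoff of the structural refinement carried out in \sect~\ref{sec:refinement}.
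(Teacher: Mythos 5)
Your proposal is correct and follows essentially the same route as the paper: the paper's proof is precisely the observation that every rule of $\ikal$, applied bottom-up to a labelled tree sequent, yields premises that are labelled tree sequents with the same root, which you have simply spelled out rule by rule (fresh-leaf extensions for $\dial$, $\boxr$, $\D$; unchanged $\rel$ for the propositional, $\prdia$, and $\prbox$ rules; and the absence of the tree-breaking $\sa$ rules). The extra bookkeeping you supply is consistent with the paper's intended argument, so there is nothing to correct.
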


\begin{proof}
The lemma follows from the observation that 
 if any rule of $\ikal$ is applied bottom-up to a labelled tree sequent, then each premise is a labelled tree sequent with the same root. 
\qed
\end{proof}

\begin{theorem}\label{thm:nested-labelled-equiv}
Every proof of a labelled tree sequent in $\ikal$ is transformable into a proof in $\nika$, and vice-versa.
\end{theorem}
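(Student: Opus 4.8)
The plan is to prove the two directions separately, each by induction on the height of the given derivation, using the translations $\ltr$ and $\ntr$ as the bridge between the two formalisms. The crucial preliminary observation is Lemma~\ref{lem:labelled-tree-derivations}: any proof in $\ikal$ of a labelled tree sequent consists entirely of labelled tree sequents sharing a fixed root, so that $\ntr$ is in fact applicable at every node of such a proof. The first thing I would establish is that $\ltr$ and $\ntr$ are mutually inverse (up to renaming of labels / choice of fresh labels) on the relevant domains: $\ntr(\ltr_{w}(\ns))$ is $\ns$ for every full nested sequent $\ns$, and $\ltr_{u}(\ntr(\Lambda))$ is $\Lambda$ for every labelled tree sequent $\Lambda$ with root $u$. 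This is a routine structural induction following the recursive clauses of the two definitions, and it also shows that the propagation graph $\prgr{\ns}$ of a nested sequent agrees (up to the same renaming) with the propagation graph $\prgr{\rel}$ of the relational part of its translation, so that the side conditions on $\prdia$ and $\prbox$ transfer verbatim between the two settings.

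For the direction from $\ikal$ to $\nika$, I would argue by induction on the height of a proof $\pi$ of a labelled tree sequent $\Lambda$ in $\ikal$. By Lemma~\ref{lem:labelled-tree-derivations} every sequent in $\pi$ is a labelled tree sequent with the fixed root $u$, so I can apply $\ntr_{u}$ throughout. For the base case, an instance of $\id$ or $\botl$ on a labelled tree sequent translates to an instance of $\id$ or $\botin$ in $\nika$ — here one uses that in a tree sequent the principal labelled formula $w : p$ (resp. $w : \bot$) and the succedent $w : p$ live at a common node, so $\ntr$ places them in a common component. For the inductive step I would go rule by rule through $\ikal$ and check that each labelled rule instance is sent by $\ntr$ to an instance of the corresponding nested rule: the propositional rules act locally at a single label and correspond to the propositional nested rules (using the output-pruning $\cdot^{\downarrow}$ to handle the premise of $\iimpl$/$\iimpin$ where the succedent is replaced); the rules $\dial$, $\boxr$, $\diain$, $\boxout$ introduce or consume a fresh label, which under $\ntr$ is exactly the creation or inspection of a bracketed subsequent $\bl \cdot \br$, the eigenvariable condition guaranteeing freshness; the $\D$ rules match directly; and $\prdia$, $\prbox$ match their nested namesakes because, as noted above, propagation graphs and hence the grammar side conditions are preserved by the translation. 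The converse direction is entirely symmetric: given a proof in $\nika$, apply $\ltr_{w}$ (with a fixed choice of fresh labels, consistently throughout the proof) and check that each nested rule instance becomes a valid $\ikal$ rule instance on labelled tree sequents; the fact that $\ltr$ always produces a tree-shaped $\rel$ with all labels occurring in $\rel$ keeps us inside the class of labelled tree sequents, and Remark~\ref{rem:diar-boxl-propagation-rule-instances} lets us use $\prdia$/$\prbox$ uniformly.

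The main obstacle I expect is bookkeeping around fresh labels and the fixed-root property, rather than any conceptual difficulty. Concretely: in the $\ikal$-to-$\nika$ direction the translation $\ntr_{v}$ at a node depends on which relational atoms of the form $vRz$ are present, so when a rule such as $\dial$ or $\boxr$ adds a new relational atom $wRu$ with $u$ an eigenvariable, I must verify that $\ntr$ of the premise differs from $\ntr$ of the conclusion exactly by inserting one new bracketed component $\bl \ntr_{u}(\Lambda_{u})\br$ at the node labelled $w$ — this requires knowing that $u$ genuinely does not occur below, which is where Lemma~\ref{lem:labelled-tree-derivations} and the eigenvariable side condition $\dag$ are used. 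In the reverse direction one must fix, once and for all, the fresh labels introduced by each application of the clause $\ltr_{v}(\bl \ns \br) := (vRu \sar \emptyset) \otimes \ltr_{u}(\ns)$ so that the labelling is coherent across an entire proof (otherwise premise and conclusion would not share relational structure); I would make this precise by translating the whole proof tree top-down with a single global assignment of labels to the nodes of the end-sequent's underlying tree. Once these freshness/coherence issues are pinned down, the rule-by-rule verification is mechanical, and the theorem follows.
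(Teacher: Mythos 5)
Your proposal is correct and follows essentially the same route as the paper, whose proof is just a terse appeal to Lemma~\ref{lem:labelled-tree-derivations} together with the fact that the rules of $\ikal$ and $\nika$ are images of one another under $\ntr$ and $\ltr$; your rule-by-rule verification, the mutual-inverse property of the translations, and the bookkeeping about fresh labels and propagation graphs simply make explicit what the paper leaves implicit.
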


\begin{proof}
Follows from \lem~\ref{lem:labelled-tree-derivations}, and the fact that the rules of $\ikal$ and $\nika$ are translations of one another under the $\ntr$ and $\ltr$ functions. 
\qed
\end{proof}

\begin{theorem}[$\nika$ Soundness and Completeness]
A formula $A$ is derivable in $\nika$ \ifandonlyif $A$ is $\axs$-valid.
\end{theorem}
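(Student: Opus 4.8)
The plan is to chain together the soundness and completeness results already established for the labelled calculi with the proof-translation theorem connecting the refined labelled systems to the nested systems. Concretely, I would first observe that a formula $A$ is $\axs$-valid exactly when the labelled sequent $\empseq \sar w : A$ (for any label $w$) is $\axs$-valid in the sense of \dfn~\ref{def:sequent-semantics-kms}; this is immediate from unwinding the definitions, since an interpretation $I$ falsifying $\empseq \sar w : A$ is nothing but a bi-relational $\axs$-model together with a world $I(w)$ at which $A$ fails, and conversely. Then, by \thm~\ref{thm:ikal-sound-complete}, $A$ is $\axs$-valid iff $\empseq \sar w : A$ is derivable in $\ikal$.

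Next I would note that $\empseq \sar w : A$ is a labelled tree sequent in the trivial way: its relational multiset is empty (hence a tree with the single root $w$), and the only label occurring in $w : A$ is $w$ itself. Therefore, by \lem~\ref{lem:labelled-tree-derivations} together with \thm~\ref{thm:nested-labelled-equiv}, any $\ikal$-proof of $\empseq \sar w : A$ transforms into an $\nika$-proof of its $\ntr$-image, and vice-versa. One then computes $\ntr_{w}(\empseq \sar w : A) = A^{\outp}$, so that a proof of $\empseq \sar w : A$ in $\ikal$ exists iff a proof of the single-formula nested sequent $A^{\outp}$ exists in $\nika$; and "$A$ is derivable in $\nika$" is by convention exactly the existence of an $\nika$-proof of $A^{\outp}$. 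Combining the three biconditionals yields the claim.

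Most of the work here is already discharged by the cited results, so the argument is essentially a bookkeeping chain: (i) $\axs$-validity of $A$ $\iff$ $\axs$-validity of $\empseq \sar w : A$; (ii) the latter $\iff$ $\ikal$-derivability of $\empseq \sar w : A$ by \thm~\ref{thm:ikal-sound-complete}; (iii) that $\iff$ $\nika$-derivability of $A^{\outp}$ by \thm~\ref{thm:nested-labelled-equiv} (applicable since $\empseq \sar w : A$ is a labelled tree sequent). The one point that needs a sentence of care is making explicit the convention that "$A$ is derivable in $\nika$" means "$A^{\outp}$ is derivable", and checking that $\ntr$ and $\ltr$ indeed match $\empseq \sar w : A$ with $A^{\outp}$ — both directions of which are one-line computations from the definitions of the translations.

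The main (and only mild) obstacle is ensuring that \thm~\ref{thm:nested-labelled-equiv} applies in \emph{both} directions to the specific sequent at hand: the theorem is stated for proofs of labelled \emph{tree} sequents, so I must confirm that $\empseq \sar w : A$ qualifies (it does, trivially) and that the $\ltr$-image of the nested sequent $A^{\outp}$ is precisely $\empseq \sar w : A$ up to renaming of the root label, so that a detour through $\ntr$ and back does not change what is being proved. Since both translation functions are defined compositionally and act as identities on the degenerate one-node case, this verification is routine.
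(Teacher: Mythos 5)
Your proposal is correct and follows essentially the same route as the paper, which simply cites \thm~\ref{thm:ikal-sound-complete} and \thm~\ref{thm:nested-labelled-equiv}; you merely spell out the bookkeeping (the $\axs$-validity of $A$ versus $\empseq \sar w : A$, and the computation $\ntr(\empseq \sar w : A) = A^{\outp}$) that the paper leaves implicit.
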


\begin{proof}
Follows from \thm~\ref{thm:ikal-sound-complete} and~\ref{thm:nested-labelled-equiv}. 
\qed
\end{proof}

\begin{figure}[t]
\noindent\hrule

\begin{center}
\begin{tabular}{c c c c}
\AxiomC{$\ns$}
\RightLabel{$\nec$}
\UnaryInfC{$\bl \ns \br$}
\DisplayProof

&

\AxiomC{$\ns \sbl \empseq \sbr$}
\RightLabel{$\wk$}
\UnaryInfC{$\ns \sbl \Delta \sbr$}
\DisplayProof

&

\AxiomC{$\ns \sbl A^{\inp}, A^{\inp} \sbr$}
\RightLabel{$\ctr$}
\UnaryInfC{$\ns \sbl A^{\inp} \sbr$}
\DisplayProof

&

\AxiomC{$\ns \sbl \bl \Delta_{1}\br,  \bl \Delta_{2} \br \sbr$}
\RightLabel{$\med$}
\UnaryInfC{$\ns \sbl \bl \Delta_{1}, \Delta_{2} \br \sbr$}
\DisplayProof
\end{tabular}
\end{center}

\hrule
\caption{Height-preserving (hp-)admissible structural rules.} 
\label{fig:structural-rules}
\end{figure}

\begin{theorem}
The rules $\nec$, $\wk$, $\ctr$, and $\med$ are hp-admissible in $\nika$.
\end{theorem}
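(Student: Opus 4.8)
The plan is to prove hp-admissibility of each of the four rules $\nec$, $\wk$, $\ctr$, and $\med$ by transporting the problem back to the refined labelled calculus $\ikal$ via the translations $\ntr$ and $\ltr$ (Theorem~\ref{thm:nested-labelled-equiv}), and arguing at the labelled level, where the rule shapes are simpler and the propagation side conditions are easier to track. Concretely, I would first observe that each nested structural rule corresponds, under $\ltr$, to a structural operation on labelled tree sequents: $\nec$ corresponds to adding a fresh root with a single relational atom below the old root; $\wk$ corresponds to labelled weakening (adding labelled formulae and/or fresh relational atoms in the tree); $\ctr$ corresponds to labelled contraction of two copies of $w : A$; and $\med$ corresponds to identifying two sibling subtrees $vRz_{1}$, $vRz_{2}$ into a single child $vRz$ (a label-substitution/merge operation). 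So it suffices to prove hp-admissibility of the corresponding labelled rules in $\ikal$ restricted to labelled tree sequents, and then push the result through the proof-translation.

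The core of the argument is the usual simultaneous induction on the height of the derivation. For $\wk$ I would prove the general statement that weakening with arbitrary relational atoms and labelled formulae is hp-admissible: in the base case ($\id$, $\botl$, $\botin$) the weakened sequent is still an instance of the same initial rule; in the inductive step, for every logical rule one applies the IH to the premise(s) and re-applies the rule — the only subtlety is that eigenvariable side conditions on $\boxr$ and $\dial$ may need a fresh renaming of the eigenvariable before weakening (hp-admissibility of label-renaming, which I would establish as a preliminary lemma), and that for $\prdia$/$\prbox$ the side condition $\exists \ppath(w,u) \in \prgr{\rel}(\stra_{\ppath}(w,u) \in \glang(\fd))$ is preserved because adding relational atoms only enlarges the propagation graph, so any witnessing path survives. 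For $\ctr$ I would similarly induct, using hp-admissibility of $\wk$ to handle the cases where the contraction formula is principal (the copying of principal formulae into premises, mentioned as the "sole difference" from Simpson's systems, is exactly what makes this go through), and again checking that merging two copies of a labelled formula cannot destroy a propagation-path witness since the graph structure is untouched. The rule $\nec$ is the easiest: prepending a fresh root $u$ with $uRw$ below the old root $w$ is hp-admissible because no rule of $\ikal$ inspects what lies strictly below the label it acts on, and the new relational atom only adds edges $(u,\fd,w),(w,\bd,u)$ to every propagation graph, again preserving all side conditions.

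The main obstacle I expect is $\med$, which at the labelled level is a genuine merge of two sibling subtrees — i.e. a label substitution $z_{2} \mapsto z_{1}$ that is not injective on the whole sequent. One has to check, by induction on height, that applying such a merge to the conclusion of each rule yields a correct instance of the same rule applied to the merged premises; the delicate points are (i) eigenvariable conditions: if $z_{1}$ or $z_{2}$ was introduced as an eigenvariable by a $\boxr$ or $\dial$ below the merge point one must first rename to keep freshness, handled again by hp-admissible renaming; and (ii) the propagation side conditions for $\prdia$ and $\prbox$: merging $z_{1}$ and $z_{2}$ does not delete any edges and does not shorten any path's string, and a path through the merged node can always be reconstructed from a path through either original node, so a witnessing path (hence a derivable string in $\glang(\fd)$, using that $\dr$ is unaffected) still exists in the merged propagation graph. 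I would isolate this graph-theoretic fact — that $\prgr{\cdot}$-path existence with an $\axs$-grammar-derivable string is preserved under sibling-merge, root-prefixing, and relational weakening — as a small standalone lemma before the main induction, since the same observation is reused for $\nec$, $\wk$, and $\ctr$. Once these labelled hp-admissibility statements are in hand, Lemma~\ref{lem:labelled-tree-derivations} guarantees we stay within labelled tree sequents with the fixed root property, and Theorem~\ref{thm:nested-labelled-equiv} translates the height-preserving transformations back into $\nika$, yielding hp-admissibility of $\nec$, $\wk$, $\ctr$, and $\med$.
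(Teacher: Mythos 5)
Your route is genuinely different from the paper's: the paper proves hp-admissibility of $\nec$, $\wk$, $\ctr$, and $\med$ by induction on the height of the derivation directly in $\nika$, the only delicate point being the observation that $\med$ preserves propagation paths from premise to conclusion, so that it permutes above $\prdia$ and $\prbox$; you instead transport the problem to $\ikal$ via $\ltr$/$\ntr$ and argue on labelled tree sequents. That detour is legitimate in principle, since the translation is rule-by-rule (hence height-preserving) and Lemma~\ref{lem:labelled-tree-derivations} keeps you inside labelled tree sequents with the fixed root property; your standalone lemma that path-existence with a string in $\glang(\fd)$ is preserved under relational weakening, root-prefixing, and sibling-merge is exactly the graph-theoretic content the paper appeals to for $\med$. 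What the detour costs you is the extra bookkeeping (eigenvariable renaming, the non-injective merge substitution) that the nested syntax hides; what it buys is that the structural operations become familiar labelled manipulations.

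There is, however, a concrete gap in your treatment of $\ctr$: hp-admissibility of $\wk$ cannot handle the cases where one of the contracted copies is principal in a rule that does not repeat its principal formula into all premises. For instance, if the last rule is $\conin$ applied to one copy of $A \land B^{\inp}$ while the second copy sits in the context, its premise is $\ns \sbl A^{\inp}, B^{\inp}, A \land B^{\inp} \sbr$, and to reapply $\conin$ after contracting you need $\ns \sbl A^{\inp}, B^{\inp} \sbr$ --- i.e. you must \emph{delete} a formula, which weakening goes the wrong way for. The same issue arises with $\disin$, $\diain$, and the right premise of $\iimpin$ (only the left premise retains $A \iimp B^{\inp}$). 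The standard repair is to first establish hp-invertibility of these rules (or prove contraction simultaneously with invertibility) and use it to unfold the context copy before invoking the induction hypothesis; the copying of principal formulae into premises only disposes of the cases of $\iimpin$'s left premise and $\prbox$. With those invertibility lemmas added, the rest of your argument --- $\nec$, $\wk$, and the merge analysis for $\med$ --- goes through.
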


\begin{proof} The height-preserving (hp-)admissibility of each rule (displayed in \fig~\ref{fig:structural-rules}) is shown by induction on the height of the given derivation. For the $\med$ rule, we note that propagation paths are preserved from premise to conclusion (cf.~\cite[\fig~12]{GorPosTiu11}), showing that the rule can be permuted above $\prdia$ and $\prbox$. 
\qed
\end{proof}

\section{Conclusion}\label{sec:conclusion}

In this paper, we employed the structural refinement methodology to extract nested sequent systems for a broad class of intuitionistic modal logics. The attainment of such systems answers the open problem of~\cite{MarStr14} to a large extent by showing how to transform axioms (namely, HSLs) into propagation/logical rules as well as how to obtain nested sequent systems for logics outside the intuitionistic modal cube. We aim to write proof-search algorithms in future work based on our nested systems which utilize saturation conditions and loop-checking (cf.~\cite{Fit83,Lyo21thesis,TiuIanGor12}) to provide decision procedures for logics within the class considered. Our primary concern will be to establish the decidability of transitive extensions of $\ik$, which has remained a longstanding open problem~\cite{GirStr20,Sim94}.

\bibliographystyle{splncs04}
\bibliography{bibliography}

\begin{thebibliography}{10}
\providecommand{\url}[1]{\texttt{#1}}
\providecommand{\urlprefix}{URL }
\providecommand{\doi}[1]{https://doi.org/#1}

\bibitem{BiePai00}
Bierman, G.M., de~Paiva, V.C.V.: On an intuitionistic modal logic. Studia
  Logica: An International Journal for Symbolic Logic  \textbf{65}(3),
  383--416 (2000), \url{http://www.jstor.org/stable/20016199}

\bibitem{Bru09}
Br{\"{u}}nnler, K.: Deep sequent systems for modal logic. Arch. Math. Log.
  \textbf{48}(6),  551--577 (2009). \doi{10.1007/s00153-009-0137-3},
  \url{https://doi.org/10.1007/s00153-009-0137-3}

\bibitem{Bul92}
Bull, R.A.: Cut elimination for propositional dynamic logic without *. Z. Math.
  Logik Grundlag. Math.  \textbf{38}(2),  85--100 (1992)

\bibitem{CasCerGasHer97}
Castilho, M.A., del Cerro, L.F., Gasquet, O., Herzig, A.: Modal tableaux with
  propagation rules and structural rules. Fundamenta Informaticae
  \textbf{32}(3, 4),  281--297 (1997)

\bibitem{CiaLyoRam18}
Ciabattoni, A., Lyon, T., Ramanayake, R.: From display to labelled proofs for
  tense logics. In: Art{\"{e}}mov, S.N., Nerode, A. (eds.) Logical Foundations
  of Computer Science - International Symposium, {LFCS} 2018, Deerfield Beach,
  FL, USA, January 8-11, 2018, Proceedings. Lecture Notes in Computer Science,
  vol. 10703, pp. 120--139. Springer (2018).
  \doi{10.1007/978-3-319-72056-2\_8},
  \url{https://doi.org/10.1007/978-3-319-72056-2\_8}

\bibitem{CiaLyoRamTiu21}
Ciabattoni, A., Lyon, T., Ramanayake, R., Tiu, A.: Display to labelled proofs
  and back again for tense logics. ACM Transactions on Computational Logic
  \textbf{22}(3),  1–31 (2021). \doi{https://doi.org/10.1145/3460492}

\bibitem{DavPfe01}
Davies, R., Pfenning, F.: A modal analysis of staged computation. J. ACM
  \textbf{48}(3),  555–604 (May 2001). \doi{10.1145/382780.382785}

\bibitem{FaiMen95}
Fairtlough, M., Mendler, M.: An intuitionistic modal logic with applications to
  the formal verification of hardware. In: Pacholski, L., Tiuryn, J. (eds.)
  Computer Science Logic. pp. 354--368. Springer Berlin Heidelberg, Berlin,
  Heidelberg (1995)

\bibitem{Fit48}
Fitch, F.B.: Intuitionistic modal logic with quantifiers. Portugaliae
  mathematica  \textbf{7}(2),  113--118 (1948),
  \url{http://eudml.org/doc/114664}

\bibitem{Fit72}
Fitting, M.: Tableau methods of proof for modal logics. Notre Dame Journal of
  Formal Logic  \textbf{13}(2),  237--247 (1972)

\bibitem{Fit83}
Fitting, M.: Proof methods for modal and intuitionistic logics, vol.~169.
  Springer Science \& Business Media (1983)

\bibitem{FitKuz15}
Fitting, M., Kuznets, R.: Modal interpolation via nested sequents. Annals of
  pure and applied logic  \textbf{166}(3),  274--305 (2015).
  \doi{10.1016/j.apal.2014.11.002}

\bibitem{GalSal10}
Galmiche, D., Salhi, Y.: Label-free natural deduction systems for
  intuitionistic and classical modal logics. Journal of Applied Non-Classical
  Logics  \textbf{20}(4),  373--421 (2010). \doi{10.3166/jancl.20.373-421}

\bibitem{GalSal15}
Galmiche, D., Salhi, Y.: {Tree-sequent calculi and decision procedures for
  intuitionistic modal logics}. Journal of Logic and Computation
  \textbf{28}(5),  967--989 (06 2015). \doi{10.1093/logcom/exv039}

\bibitem{GirStr20}
Girlando, M., Stra{\ss}burger, L.: Moin: A nested sequent theorem prover for
  intuitionistic modal logics (system description). In: Peltier, N.,
  Sofronie-Stokkermans, V. (eds.) Automated Reasoning. pp. 398--407. Springer
  International Publishing, Cham (2020)

\bibitem{GorPosTiu08}
Gor{\'{e}}, R., Postniece, L., Tiu, A.: Cut-elimination and proof-search for
  bi-intuitionistic logic using nested sequents. In: Areces, C., Goldblatt, R.
  (eds.) Advances in Modal Logic 7, papers from the seventh conference on
  "Advances in Modal Logic," held in Nancy, France, 9-12 September 2008. pp.
  43--66. College Publications (2008),
  \url{http://www.aiml.net/volumes/volume7/Gore-Postniece-Tiu.pdf}

\bibitem{GorPosTiu11}
Gor{\'{e}}, R., Postniece, L., Tiu, A.: On the correspondence between display
  postulates and deep inference in nested sequent calculi for tense logics.
  Log. Methods Comput. Sci.  \textbf{7}(2) (2011).
  \doi{10.2168/LMCS-7(2:8)2011}, \url{https://doi.org/10.2168/LMCS-7(2:8)2011}

\bibitem{GorRam12}
Gor{\'{e}}, R., Ramanayake, R.: Labelled tree sequents, tree hypersequents and
  nested (deep) sequents. In: Bolander, T., Bra{\"{u}}ner, T., Ghilardi, S.,
  Moss, L.S. (eds.) Advances in Modal Logic 9, papers from the ninth conference
  on "Advances in Modal Logic," held in Copenhagen, Denmark, 22-25 August 2012.
  pp. 279--299. College Publications (2012),
  \url{http://www.aiml.net/volumes/volume9/Gore-Ramanayake.pdf}

\bibitem{Kas94}
Kashima, R.: Cut-free sequent calculi for some tense logics. Studia Logica
  \textbf{53}(1),  119--135 (1994)

\bibitem{Lem77}
Lemmon, E.J., Scott, D.S.: An Introduction to Modal Logic: the Lemmon Notes.
  Blackwell (1977)

\bibitem{Lyo21}
Lyon, T.: {On the correspondence between nested calculi and semantic systems
  for intuitionistic logics}. Journal of Logic and Computation  \textbf{31}(1),
   213--265 (12 2020). \doi{10.1093/logcom/exaa078}

\bibitem{Lyo21thesis}
Lyon, T.: Refining Labelled Systems for Modal and Constructive Logics with
  Applications. Ph.D. thesis, Technische Universit\"at Wien (2021)

\bibitem{LyoBer19}
Lyon, T., van Berkel, K.: Automating agential reasoning: Proof-calculi and
  syntactic decidability for stit logics. In: Baldoni, M., Dastani, M., Liao,
  B., Sakurai, Y., Zalila~Wenkstern, R. (eds.) {PRIMA} 2019: Principles and
  Practice of Multi-Agent Systems - 22nd International Conference, Turin,
  Italy, October 28-31, 2019, Proceedings. Lecture Notes in Computer Science,
  vol. 11873, pp. 202--218. Springer International Publishing, Cham (2019)

\bibitem{LyoTiuGorClo20}
Lyon, T., Tiu, A., Gor{\'{e}}, R., Clouston, R.: Syntactic interpolation for
  tense logics and bi-intuitionistic logic via nested sequents. In:
  Fern{\'{a}}ndez, M., Muscholl, A. (eds.) 28th {EACSL} Annual Conference on
  Computer Science Logic, {CSL} 2020, January 13-16, 2020, Barcelona, Spain.
  LIPIcs, vol.~152, pp. 28:1--28:16. Schloss Dagstuhl - Leibniz-Zentrum
  f{\"{u}}r Informatik (2020). \doi{10.4230/LIPIcs.CSL.2020.28}

\bibitem{MarStr14}
Marin, S., Stra{\ss}burger, L.: Label-free modular systems for classical and
  intuitionistic modal logics. In: Advances in Modal Logic 10, invited and
  contributed papers from the tenth conference on "Advances in Modal Logic,"
  held in Groningen, The Netherlands, August 5-8, 2014. pp. 387--406 (2014),
  \url{http://www.aiml.net/volumes/volume10/Marin-Strassburger.pdf}

\bibitem{Pim18}
Pimentel, E.: A semantical view of proof systems. In: Moss, L.S., de~Queiroz,
  R.J.G.B., Mart{\'{\i}}nez, M. (eds.) Logic, Language, Information, and
  Computation - 25th International Workshop, WoLLIC 2018, Bogota, Colombia,
  July 24-27, 2018, Proceedings. Lecture Notes in Computer Science, vol. 10944,
  pp. 61--76. Springer (2018). \doi{10.1007/978-3-662-57669-4\_3}

\bibitem{Pit91}
Pitts, A.M.: Evaluation logic. In: IV Higher Order Workshop, Banff 1990. pp.
  162--189. Springer (1991)

\bibitem{PloSti86}
Plotkin, G., Stirling, C.: A framework for intuitionistic modal logics:
  Extended abstract. In: Proceedings of the 1986 Conference on Theoretical
  Aspects of Reasoning about Knowledge. p. 399–406. TARK '86, Morgan Kaufmann
  Publishers Inc., San Francisco, CA, USA (1986)

\bibitem{Pos47}
Post, E.L.: Recursive unsolvability of a problem of {Thue}. The Journal of
  Symbolic Logic  \textbf{12}(1),  1--11 (1947)

\bibitem{Ser84}
Servi, G.F.: Axiomatizations for some intuitionistic modal logics. Rend. Sem.
  Mat. Univers. Politecn. Torino  \textbf{42}(3),  179--194 (1984)

\bibitem{Sim94}
Simpson, A.K.: The proof theory and semantics of intuitionistic modal logic.
  Ph.D. thesis, University of Edinburgh. College of Science and Engineering.
  School of Informatics (1994)

\bibitem{Str13}
Stra{\ss}burger, L.: Cut elimination in nested sequents for intuitionistic
  modal logics. In: Pfenning, F. (ed.) Foundations of Software Science and
  Computation Structures. Lecture Notes in Computer Science, vol.~7794, pp.
  209--224. Springer Berlin Heidelberg, Berlin, Heidelberg (2013)

\bibitem{TiuIanGor12}
Tiu, A., Ianovski, E., Gor{\'{e}}, R.: Grammar logics in nested sequent
  calculus: Proof theory and decision procedures. In: Bolander, T.,
  Bra{\"{u}}ner, T., Ghilardi, S., Moss, L.S. (eds.) Advances in Modal Logic 9,
  papers from the ninth conference on "Advances in Modal Logic," held in
  Copenhagen, Denmark, 22-25 August 2012. pp. 516--537. College Publications
  (2012), \url{http://www.aiml.net/volumes/volume9/Tiu-Ianovski-Gore.pdf}

\end{thebibliography}

\end{document}